\colorlet{cite}{LimeGreen!50!Green}
\tikzset{ 
  baseline=-2.3pt,
  text height=1.5ex, text depth=0.25ex,
  >=stealth,
  node distance=2cm,
  mid/.style={fill=white,inner sep=2.5pt},
}
\DeclareMathOperator{\Tot}{Tot}
\DeclareMathOperator{\Tr}{Tr}
\newcommand{\old}[1]{}
\newcommand{\ce}{\mathrel{\mathop:}=}
\theoremstyle{plain}	
\newtheorem{theorem}{Theorem}[section] 
\newtheorem{lemma}[theorem]{Lemma} 
\newtheorem*{theorem*}{Theorem}
\newtheorem{corollary}[theorem]{Corollary}
\theoremstyle{definition}
\newtheorem{example}[theorem]{Example}
\newtheorem*{conjecture*}{Conjecture}
\theoremstyle{remark}
\newtheorem{remark}[theorem]{Remark}
\newtheorem{definition}[theorem]{Definition}
\newtheorem*{lemma*}{Lemma}
\author{E. Ballico}
\address{EB: Dept. Mathematics, University of Trento,   I-38050, Povo, Italy.}
\author{E. Gasparim}
\address{EG: Dept. Mathematics, Univ. Catolica del Norte, Antofagasta, Chile}
 \author{M. P. Garcia del Moral}
 \address{MPGM: Universidad de la Rioja \\ Área de Física, Departamento de Química, Centro Científico Tecnológico, Universidad de la Rioja, La Rioja, 26006 Spain}
\author{C. Las Heras}
\address{CLH: Instituto de Física Teórica UAM/CSIC, C/ Nicolás Cabrera 13-15 Universidad Aut\'onoma de Madrid 28040, Spain}
\thanks{}
\title[]{F-theory with hyperelliptic fibrations}
\begin{document}

\begin{abstract}
We discuss the role of hyperelliptic fibrations in F-theory. 
For each  \\
even integer $n$ we give a noncompact Calabi--Yau threefold $X$ 
containing a hyperelliptically fibered surface $Y$, such that $X$ and $Y$ are homotopy equivalent and $c_2(X) = n.$ 
We investigate two  distinct cases depending on the position of the hyperelliptic fibration. 
First, we propose to extend   F-theory considering  hyperelliptic fibrations,
giving an identification between  the determinant of the period matrix 
and the axio-dilaton. Such an identification requires that the curve satisfies  an appropriate criterium which we describe.
Our explicit examples  have split Jacobian, 
preserve the same number of degrees of freedom of usual F-theory, while 
allowing for the appearance of a greater variety of singularities. Second, when 
the hyperelliptic fibration is contained in the base of a Calabi--Yau fourfold,
 we 
 show that  tadpole cancellation conditions are satisfied for arbitrarily large values of $c_2(X)$.

\end{abstract}

\maketitle
  \tableofcontents

We recall that a hyperelliptic curve of genus $g\ge 2$ is a
 connected and compact curve $C$ such that there is a degree $2$ morphism $u\colon C\to \mathbb P^1$. All genus $2$ curves are hyperelliptic because the canonical system $|K_C|$ has target $\mathbb P^{g-1} =\mathbb P^1$ and degree $2g-2=2$. When working in characteristic different from 2, certainly the case here, as we consider complex curves,
 a hyperelliptic curve of genus $g$ can be written in the form  $$y^2= f(x)$$ 
 where $f$ is a polynomial with distinct roots of degree either $2g+2$ or $2g+1$.


\section{Introduction}

We analyze the role of hyperelliptic fibrations in F-theory;
in particular,  fibrations by curves of genus 2. In this work, each hyperelliptically fibered surface 
 is consistently embedded inside a 
noncompact Calabi--Yau threefold ($CY_3)$ which is itself  embedded into a Calabi--Yau fourfold $(CY_4)$, giving in 
either case a total of 6 compact dimensions. 
 Two  very different cases are considered: 
 \begin{itemize}
 \item
  the first case,  named  {\it upstairs}, where the $CY_3$ is placed in the position of fiber of the $CY_4$:
  in which case the hyperelliptic curve itself may also 
 be regarded as living in  fibers of the $CY_4$ (by composition the 2 fibrations);

 \item
  and the second case, named {\it downstairs},
 where the $CY_3$  is positioned at the base of an elliptically fibered $CY_4$ as  standard in F-theory, and
 where we explore some the additional  features brought up by the presence of the hyperelliptic  fibration inside the $CY_3$.
 \end{itemize}

F-theory compactifications are type IIB string theory compactifications where the backreaction of the 7-branes is considered \cite{Vafa}.
 It allows control of the nonperturbative limit by encoding the physics in algebro-geometric language; see \cite{Weigand} for a review. 
 The backreaction of the 7-branes generates a holomorphically varying profile of the axio-dilaton: 
\begin{equation}
 \lambda=C_0+i\exp{(-\phi)}, \label{Axio-dilaton}
 \end{equation}
 where $g=\exp{(\phi)}$ is the string coupling constant. The $SL(2,\mathbb{Z})$ invariance 
 appearing in type IIB string theory leads to a natural interpretation in terms of elliptic fibrations. The axio-dilaton is geometrized in terms of the complex 
structure moduli of elliptic curves, with both transforming by $SL(2,\mathbb{Z})$, see for example \cite{DW}. In principle, this choice does not exclude the possibility of other geometrical realizations.

In this work, we explore different scenarios where hyperelliptic curves play an important role in F-theory. We 
 consider hyperelliptic curves of genus 2 for simplicity. Notice that the generalization of the elliptic fibration to
  a hyperelliptic one is not at all straightforward when we want to preserve the property of the total space being Calabi--Yau. 
  We bring forward an original yet canonical strategy to solve this problem, obtained by embedding the hyperelliptic fibered surface $Y$ into the 
  single noncompact $CY_3$
  which forms the total space of the canonical bundle of $Y$.\\

{\sc Case I}: The hyperelliptic fibration occurs upstairs. 
   In this case, the hyperelliptic curves may be seen as occurring in the fibers of the $CY_3$. 
   We provide a new interpretation of the axio-dilaton in terms of the determinant of the period matrix parametrizing the hyperelliptic curve (Sec.\thinspace \ref{up}). 
   
   Our proposal differs from the ones considered in 
\cite{Braun,Candelas1,Candelas2,MMP}, where an auxiliary K3
  fibration is considered, resulting in  G-theory formulated in 14 dimensions instead of 12,
  which was used to describe type IIB non-perturbative flux vacua. When those
  authors consider the K3 surface with three moduli, the U-duality group becomes isomorphic to $Sp(4,\mathbb{Z})$, 
  thus coinciding with the mapping class group of a hyperelliptic curve. For this reason, the authors claim an interpretation in terms of a certain hyperelliptic fibration, considering as the axio-dilaton a single entry of the period matrix. In contrast, in our case, by considering the axio-dilaton as the determinant of the period matrix, we obtain that varying any  
  of the entries of the period matrix influences the behaviour of the axio-dilaton.\\

{\sc Case II}: The hyperelliptic fibration occurs downstairs. 
In this case, the hyperelliptic fibration
 is contained on the base of an elliptically fibered $CY_4$. We briefly discuss the issue of tadpole cancellation and its relation with numerical invariants of the $CY_3$ (Sec.\thinspace  \ref{down}).    \\

  This work is organized as follows: in Section \ref{fib}, we recall  algebro-geometric results about fibered surfaces and construct families of curves containing any number of nodal fibers (Theorem \ref{aa3}) which in addition may contain any chosen nodal curve  (Corollary \ref{chosen}). 
  We discuss 
the relations between curves having split Jacobians and singularities of multiplicities at least $2$ or $3$ (Theorem \ref{23}),
which is of interest given that the specific  examples we present in Section  \ref{phcond} do satisfy the condition of having split Jacobians.

  In Section \ref{totomega}, we embed any given hyperelliptically fibered surface $Y$   into its corresponding noncompact Calabi--Yau threefold 
  $X= \Tot (\omega_Y)$, showing that all even integers may be obtained as $c_2(X)$ (Theorems \ref{ue1}, \ref{uet}).
Note that here $Y$ is a deformation retract of $X$, thus $H^4(X,\mathbb{Z})=H^4(Y,\mathbb{Z})=\mathbb{Z}$. In Section \ref{up}, we consider Case I, where we propose a new interpretation of the axio-dilaton,  as the determinant of the period matrices associated with the hyperelliptic fibration. Since the axio-dilaton must transform by  $SL(2, \mathbb Z)$, 
   we impose this condition on the determinant, which then forces the set of admissible hyperelliptic curves in our 
   fibrations  to 
   satisfy some explicit strict conditions (\ref{ph}). In Section \ref{phcond}  we give several examples, and in \ref{monodromies} we discuss some physical implications. In Section \ref{down}, we consider Case II, where
     we briefly discuss some phenomenological implications that the presence of hyperelliptic fibrations might bring about for
      the cancellation of tadpoles, and the matter content of the theory. Finally, in Section \ref{conclusion}, we discuss our results.\\

\section{Fibered surfaces }\label{fib}
We recall some algebro-geometric results about fibered surfaces
 that will be used  to determine the existence of new constraints 
 to the matter content in a generalisation of  F-theory formulated using a hyperelliptic fibration. 

In standard algebraic geometric notation,  $\mathcal O_X$ denotes the structure sheaf of $X$,
$\Omega_X^1$ the sheaf of 1-forms and 
$\omega_X$ the dualizing sheaf, as in \cite{Ha}, and $h^i(X)$ denotes the dimension of $H^i(X)$. 
For the fibered surfaces, we denote by $g_1$ the genus of the fiber and $g_2$ the genus of the base. 

\begin{definition}\label{chi}
Let $X$ be a smooth compact complex surface.
The {\bf holomorphic Euler characteristic} of $X$ is 
$$\chi_{hol}(X)\ce \chi (\mathcal O_X)= 1-h^1(\mathcal O_X) +p_g(X) = h^0(\mathcal O_X) -h^1(\mathcal O_X) +h^2(\mathcal O_X).$$
The (topological) {\bf Euler characteristic} of $X$ is
$$\chi(X) \ce \chi(TX) = h^0(TX) -h^1(TX) +h^2(TX)-h^3(TX) +h^4(TX) $$
(also for smooth compact manifold of real dimension 4). 
\end{definition}
 Noether's formula gives 
 $$c_1^2(X)+c_2(X)= (K\cdot K) +\chi(X)$$ 
  where $c_1^2$ and $c_2$ denote the Chern numbers and $K$ the canonical class of $X$.
  Note that if $X'$ is the surface obtained from $X$ by blowing up of one point, then 
  we have ve $c_1^2(X') =c_1^2(X)-1$ and hence $c_2(X') =c_2(X)+1$ \cite{BHPV}.

Let $f\colon X\to D$ be a proper holomorphic map with $X$ a smooth and connected complex surface (often called a fibration of curves with $D$ as a base) and $D$ a (not necessarily compact) Riemann surface (even not compact algebraic, e.g. a disc of $\mathbb C$). 

Assume $f_\ast (\mathcal O_X)=\mathcal O_D$, i.e. assume that a general fiber is connected (and so all fibers of $f$ are connected). Let $A$ be any smooth fiber of $f$ and $A_s\ce f^{-1}(s)$ any fiber of $f$. 
Then the Euler numbers satisfy 
$\chi(A_s)\ge \chi(A)$. If $X$ is compact, i.e. if $D$ is compact, then  by \cite[Prop.\thinspace III.11.4]{BHPV} we have

\begin{equation}\label{eqa1}
\chi(X) =\chi(A)\chi(D) +\sum _{s\in D} (\chi(A) -\chi(A_s).
\end{equation}
Moreover, there is an easy criterion which gives $\chi(A_s)>\chi(A)$ if $A_s$ is singular and not a multiple of a smooth elliptic curve \cite[III.11.5]{BHPV}. Hence if $D$ is compact of genus $g_2$ and a general fiber of $f$ is smooth of genus $g_1$, then 
$$\chi(X)\ge 4(g_1-1)(g_2-1)$$ 
with strict inequality if some of the fibers are singular and either $g_1\ne 1$ or $g_1=1$, but there is at least one fiber with is not a multiple of an elliptic curve \cite[III.\thinspace 11.6]{BHPV}.

Since any smooth genus 2 curve is hyperelliptic, we have that 
the simplest examples of hyperelliptic fibrations 
are those surfaces fibered by  Riemann surfaces of genus $2$.

 \subsection{Existence of surfaces with nodal fibers}
 
 If $L$ is a line bundle over a projective variety, the linear system corresponding to $L$ is denoted by $|L|$.
We have $ |L|\ce \mathbb{P} (H^0(L))$ 
 formed by the set of non-zero sections of $L$, up to a non-zero multiplicative scalar.

  The linear system $|L|$ can 
 equivalently be described as the family $|D|$ of divisors linearly equivalent to $D$. 
 For a projective surface $Y$,  effective divisors  correspond to curves in $Y$. 
 Therefore
 saying that  $\dim |L| = k$ this means that $|D|$ may be regarded as  a family of curves parametrized by $\mathbb P^k$.

 \begin{theorem}\label{aa3}
 Let $Y$ be a smooth projective surface. Fix an integer $k>0$. There exists a very ample line bundle $L$ on $Y$ with the following properties.
For each integer $t\in \{0,\dots ,k\}$ let $V(L,t)$ denote the set of all $C\in |L|$ which are integral, nodal and with exactly $t$ nodes. Then $V(L,t)\ne \emptyset$ and $\dim V(L,t) =\dim |L| -t$
for each $t\in \{0,\dots ,k\}$. Moreover, if $t>0$, then every element in $V(L,t)$ is in the closure of $V(L,t-1)$.
 \end{theorem}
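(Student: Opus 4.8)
The plan is to produce $L$ as a sufficiently high power of a fixed very ample line bundle on $Y$, and then to count dimensions in the space of nodal curves using a standard Severi-variety / Bertini-type argument. First I would fix a very ample line bundle $H$ on $Y$ and consider $L = H^{\otimes m}$ for $m \gg 0$; the point of taking $m$ large is to guarantee that $H^0(L)$ is big enough that one can impose $t \le k$ independent node conditions at $t$ general points of $Y$, and that the generic such curve through those points has an honest node there and no other singularities. Concretely, imposing a node at a prescribed point $p$ is three linear conditions on $H^0(L)$ (vanishing of the section and of its two first partials in local coordinates), so the locus of sections singular at $t$ prescribed general points has codimension $3t$; letting the $t$ points vary in $Y\times\dots\times Y$ adds back $2t$ dimensions, giving an expected codimension $t$ in $|L|$ for curves with (at least) $t$ nodes. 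This is exactly the claimed formula $\dim V(L,t) = \dim|L| - t$.

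The key steps, in order, are: (1) choose $m$ large enough — using Serre vanishing and very ampleness of $H$ — that the evaluation maps $H^0(L) \to \bigoplus_{i} (\mathcal{O}_Y/\mathfrak{m}_{p_i}^2)$ are surjective for any $t \le k$ distinct points $p_1,\dots,p_t$, so the node conditions at distinct points are genuinely independent; (2) use a Bertini-type argument on the incidence variety $\{(C,p_1,\dots,p_t) : C \text{ is singular at each } p_i\}$, projecting to $|L|$, to see that $V(L,t)$ is nonempty of the expected dimension and that its generic member is integral with exactly $t$ nodes and no worse singularities — here one uses that for $m\gg 0$ the generic curve in $|L|$ is smooth (Bertini), the generic curve singular at one general point has an ordinary node there (a local computation with the $2$-jet), and irreducibility follows because $L$ is very ample (so a generic section is irreducible, and staying in an open dense subset of the family of $t$-nodal curves preserves this); (3) prove the closure/specialization statement by a one-parameter degeneration: starting from a curve with $t-1$ nodes, one moves one more general point onto the curve (or, dually, deforms within the linear subsystem of curves through $t-1$ fixed general points until a $t$-th node appears), exhibiting every element of $V(L,t)$ as a limit of elements of $V(L,t-1)$; equivalently, $\overline{V(L,t-1)} \supseteq V(L,t)$ because the $t$-nodal locus is in the boundary of the $(t-1)$-nodal locus inside the universal Severi variety, which is connected of the expected dimension for $m$ large.

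The main obstacle I expect is step (2): controlling the singularities of the generic member of $V(L,t)$, i.e. showing it has \emph{exactly} $t$ nodes (neither extra nodes appearing for free, nor the prescribed singular points degenerating into cusps, tacnodes, or higher-multiplicity points). The cusp/tacnode issue is handled by a transversality argument — the locus of sections with a worse-than-nodal singularity at a varying point has codimension $\ge t+1$ in $|L|$, hence misses a generic point of $V(L,t)$ — but making this uniform over all of $Y$ is where the hypothesis "$m \gg 0$" (depending on $Y$ and $k$) is really used, via a jet-ampleness / very-ampleness of high powers estimate. The irreducibility and integrality of the generic $t$-nodal curve also rely on $m$ being large; for small $m$ the Severi variety can be reducible or even empty, which is why the theorem only asserts the existence of \emph{some} very ample $L$ rather than the statement for every very ample $L$.
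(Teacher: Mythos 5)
Your proposal is correct and follows essentially the same strategy as the paper: take $L$ to be a sufficiently high power of a very ample line bundle, use vanishing/surjectivity onto second- and third-order jets at up to $k$ points to make the node conditions independent and to force the generic singular point to be an honest node, count dimensions via the incidence variety, and defer the closure/specialization statement to the standard Severi-variety arguments (the paper cites Dedieu--Sernesi and Tannenbaum for exactly this). The only cosmetic difference is that the paper phrases the jet conditions as $h^1(\mathcal{I}_{3S}\otimes L)=0$ for the schemes $2p$, $3p$ of degrees $3$ and $6$, which is the same transversality you invoke via jet-ampleness of high powers.
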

 
 \begin{proof}
 For any $p\in Y$ let $2p$ (resp. $3p$) denote the closed subscheme of $Y$ with $(\mathcal{I}_p)^2$ (resp. $(\mathcal{I}_p)^3$) as its ideal sheaf. The schemes $2p$ and $3p$ are zero-dimensional, $p$ is their reduction, $\deg (2p) =3$ and $\deg(3p) =6$. For any finite set $S\subset Y$ set $2S:= \cup _{p\in S} 2p$ and $3S:= \cup _{p\in S} 3p$. Fix a very ample line bundle $R$ on $Y$ such that $h^i(R) =h^i(R\otimes \omega _Y^\vee) =0$ for all $i>0$. Take an integer $m\ge 6k$ and set $L:= R^{\otimes m}$. It is straightforward to prove that
 $h^1(\mathcal{I}_{3S}\otimes L) =0$ for all $S\subset Y$ such that $\#S\le k$. Note that $C\in |\mathcal{I}_{2p}\otimes L|$ if and only if $C$ is singular at  $p$ and that $\dim |\mathcal{I}_{3p}\otimes L| =\dim |L| -6$ implies that a general $C\in  |\mathcal{I}_{2p}\otimes L|$ is nodal at $p$. Since $m\ge 3k$ we easily see that $h^1(\mathcal{I}_{3S}\otimes L) =0$ for all $S\subset Y$ such that $\#S\le k$.
 Then a straightforward (but long) argument concludes the proof as in \cite{dedse,t}.
 \end{proof}

\begin{example}
  Consider the case of  $\mathbb{P}^2$. For any integer $d\ge 2$ and $t\ge 0$ let $V(\mathbb{P}^2,d,t)$ denote the set of all irreducible and nodal plane curves with exactly $t$ nodes.
  We have $V(\mathbb{P}^2,d,t) =\emptyset$ if $t>(d-1)(d-2)/2$, while each $V(\mathbb{P}^2,d,t)$ is irreducible and of dimension $(d^2+3d)/2-t$ for all $0\le t\le (d-1)(d-2)/2$. Moreover
  if $t>0$, then $V(\mathbb{P}^2,d,t)$ is contained in the closure of $V(\mathbb{P}^2,d,t-1)$, see \cite{harris2}.
  
  For many surfaces which are ``almost'' $\mathbb{P}^2$ one can do  the same \cite{ac,dedse,t,ty}. 
\end{example}
    
      \begin{corollary} \label{chosen} Any curve with at most nodal singularities belongs to a fibered surface over $\mathbb P^1$.
    \end{corollary}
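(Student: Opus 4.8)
The plan is to embed $C$ into a smooth projective surface and then turn a blow-up of that surface into a fibration over $\mathbb{P}^1$; since blowing up points disjoint from $C$ leaves $C$ untouched, this produces a fibered surface $f\colon X\to\mathbb{P}^1$ with $C\subset X$. Note that $C$ is here only required to be a curve in the total space of $f$: in general it is a multisection, not a fibre, and indeed it cannot always be made a fibre (a general nodal curve of large genus is not a fibre of any fibration over $\mathbb{P}^1$).

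The first and principal step is to place $C$ on a smooth projective surface. After a general linear projection from any projective embedding we may assume $C\subset\mathbb{P}^3$ is nodal: the secant variety of a curve, which already contains its tangent variety, has dimension $\le 3$, so the centres of the successive projections can be chosen generically as long as the ambient space has dimension $\ge 4$, and such projections introduce no new singularities. For $d\gg 0$ the base scheme of the linear system $|\mathcal{I}_C(d)|$ on $\mathbb{P}^3$ is exactly $C$, so by Bertini a general member $Y\in|\mathcal{I}_C(d)|$ is smooth away from $C$; that a general such $Y$ is moreover smooth along $C$ is classical, the point being that a nodal curve is a local complete intersection: near a node, in coordinates with $\mathcal{I}_C=(z,xy)$, a local section of $\mathcal{I}_C(d)$ has the form $az+bxy$ and hence differential $a(0)\,dz$ at the node, which is nonzero for a general member, while the smooth points of $C$ are handled by the same kind of generic transversality. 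Thus $C$ sits on a smooth projective surface $Y$.

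Given $Y$, the fibration is built by the usual pencil construction. Fix a very ample line bundle $A$ on $Y$ and a general pencil $\Lambda\subset|A|$ spanned by two members $D,D'$; by Bertini a general member of $\Lambda$ is a smooth irreducible, hence connected, curve, and the base locus $B=D\cap D'$ is a reduced set of $(A\cdot A)$ points. Since $D\cap C$ is finite and $|A|$ is base-point free, a general $D'$ avoids $D\cap C$, so we may arrange $B\cap C=\emptyset$. Let $\sigma\colon X\to Y$ be the blow-up of $Y$ along $B$; then $\Lambda$ becomes base-point free on $X$ and defines a morphism $f\colon X\to\mathbb{P}^1$ whose general fibre is isomorphic to the smooth connected curve $D$, so all fibres of $f$ are connected and $f\colon X\to\mathbb{P}^1$ is a fibered surface. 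As $B\cap C=\emptyset$, the map $\sigma$ is an isomorphism over a neighborhood of $C$, so $C$ embeds into $X$, and $C$ belongs to $(X,f)$.

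The genuine obstacle I expect is precisely the smoothness of a general surface through $C$ along $C$ itself — off $C$ it is immediate from Bertini. This is where the nodal hypothesis is really used: being a local complete intersection is what allows one to choose, among the local generators of $\mathcal{I}_C$, one that is transverse to the singular locus; for a reduced curve that fails to be a local complete intersection, every surface containing it is forced to be singular along part of it, so the restriction to curves with at most nodes (more precisely, to locally complete intersection curves) is essential to the argument.
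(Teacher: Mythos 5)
There is a genuine gap: you prove a strictly weaker statement than the one the paper intends and proves. The paper's proof (and the surrounding text, e.g.\ Remark~\ref{Rm3}, which says the family may be chosen ``to contain a specific chosen curve $C$ as a fiber'') realizes $C$ as a \emph{fiber} of the fibration, which is the point that matters for the rest of the paper, where singular fibers carry the physical content. Your construction takes a general pencil in an auxiliary very ample system $|A|$ that has nothing to do with $C$, so $C$ ends up as a multisection, and you explicitly disclaim the fiber statement. The mechanism you are missing is that, in the setting of Theorem~\ref{aa3}, the curve $C$ is itself a member of the very ample linear system $|L|$ (it is an element of $V(L,k)\subset |L|$). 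One therefore chooses the pencil \emph{inside $|L|$ and through $C$}: if $s\in H^0(L)$ is an equation of $C$ and $V\subset H^0(L)$ is a general $2$-dimensional subspace containing $s$, the base locus of $V$ consists of $L\cdot L$ reduced points none of which lies on $C$, and after blowing them up the induced morphism to $\mathbb P^1$ has $C$ as one of its fibers. Your parenthetical claim that ``a general nodal curve of large genus is not a fibre of any fibration over $\mathbb{P}^1$'' is unsubstantiated and is contradicted by exactly this argument whenever $C$ moves in a very ample system on a smooth surface.

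That said, your first step has independent value: the paper's proof silently starts from ``$C\in V(L,k)$'', i.e.\ it assumes the nodal curve is already presented as a member of a very ample system on a smooth surface, whereas you address how to place an abstract nodal curve on a smooth projective surface at all (generic projection to $\mathbb P^3$ plus a general high-degree surface through $C$, using that a nodal curve is a local complete intersection). To repair your argument you would need to combine that step with the paper's pencil: after embedding $C$ in a smooth surface $Y$, you must arrange that $\mathcal O_Y(C)$ (or $C$ viewed inside a suitable very ample system containing it as a member) moves in a pencil with base locus off $C$; only then does blowing up produce a fibration with $C$ as a fiber rather than as a multisection.
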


   \begin{proof}    To construct a fibrations over $\mathbb P^1$ having  a chosen curve $C$ as a fiber,
  take $(Y,L)$ as in Proposition \ref{aa3} and set $\alpha:= L\cdot L$ (the self-intersection number). Fix $C\in V(L,k)$ and call $s\in H^0(L)$ an equation of $C$.
  Let $V\subset H^0(L)$ be a general $2$-dimensional linear subspace containing $s$. Note that $V$ induces a non-constant rational map $\psi$ from $Y$ to $\mathbb{P}^1$. The generality
  of $V$ means that the base locus of $V$ is given by $\alpha$ distinct points, none of them being contained in $C$. Let $Y_1$ be the blowing-up of $Y$ at these $\alpha$ points. The rational map $\psi$ induces a fibration
   $f\colon Y_1\to \mathbb{P}^1$ with $C$ as one of its fibers, while the general fiber of $f$ is a smooth curve. 
 \end{proof}

\begin{remark}\label{Rm3}In conclusion, we see that we may choose our family to have any given number of nodal fibers, and 
we may also choose the family to contain a specific chosen curve $C$ as a fiber. We could also construct fibrations with 
singularities with higher multiplicities, but we leave this for future work. 
\end{remark}

\subsection{Split Jacobians  and singularities}\label{SJ}
Here we mention a construction  connecting some hyperelliptic curves to 
elliptic ones. It applies to the examples \ref{ex1}-- \ref{ex5}  in section \ref{phcond}
Let C be a genus 2 curve  over $\mathbb C$ and let
 $J(C)$ be its Jacobian. An abelian variety  is called decomposable  if it  is isogenous  to a product of elliptic curves $E_1 \times E_2$.
The curve $C$ has a decomposable Jacobian if and only if there is a cover $\phi\colon  C \rightarrow  E_1$ to an elliptic curve $E_1$.

\begin{theorem}\label{23} Let $C$ be a genus $2$ smooth curve with split Jacobian. Then there exists an elliptic curve $E$ 
and a surjection $f\colon C\to E$ having ramification points which in local coordinates are given by either 
$z\mapsto z^2$ or $z\mapsto z^3$.
\end{theorem}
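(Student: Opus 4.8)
The plan is to derive the statement from the Riemann--Hurwitz formula, the only genuine input being the equivalence recalled just above: a genus $2$ curve has decomposable Jacobian if and only if it admits a non-constant morphism onto an elliptic curve. So I would begin by fixing, using the split-Jacobian hypothesis, a non-constant morphism $f\colon C\to E$ with $E$ an elliptic curve (say $E=E_1$). Since $C$ and $E$ are smooth projective curves over $\mathbb C$ and $f$ is non-constant, $f$ is automatically finite and surjective of some degree $n\ge 2$; the precise value of $n$ will be irrelevant.

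Next I would apply Riemann--Hurwitz. We are in characteristic $0$, so $f$ is tamely ramified and its ramification divisor is $R=\sum_{p\in C}(e_p-1)\,p$, where $e_p$ denotes the ramification index of $f$ at $p$. The formula
$$2g_C-2 = n\,(2g_E-2) + \deg R$$
becomes, with $g_C=2$ and $g_E=1$, simply $2 = 0 + \deg R$, hence $\deg R = 2$, i.e. $\sum_{p}(e_p-1)=2$.

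Finally I would read off the ramification profile. In characteristic $0$ a finite morphism of smooth curves admits, at each point $p$, local analytic coordinates in which it is written $z\mapsto z^{e_p}$, and such a point contributes $e_p-1$ to $\deg R$. The only ways to write $2$ as a sum of positive integers of the shape $e_p-1$ are $2=1+1$, which corresponds to two ramification points each with $e_p=2$ (local model $z\mapsto z^2$), and $2=2$, which corresponds to a single ramification point with $e_p=3$ (local model $z\mapsto z^3$). In either case every ramification point of $f$ has local model $z\mapsto z^2$ or $z\mapsto z^3$, which is exactly the assertion (in fact the two types cannot occur simultaneously).

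I do not expect any serious obstacle here: the analytic local normal form and the Riemann--Hurwitz formula are standard, so the only point deserving a word of justification is the very first step, namely that the split-Jacobian condition produces an honest finite cover $C\to E$ rather than merely an isogeny or a correspondence at the level of abelian varieties. This is precisely the classical fact quoted before the theorem; to be self-contained one composes the Abel--Jacobi embedding $C\hookrightarrow J(C)$ with the projection $J(C)\to E$ dual to an elliptic factor and checks that the resulting map to $E$ is non-constant, which it must be since $C$ generates $J(C)$.
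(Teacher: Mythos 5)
Your proposal is correct and follows essentially the same route as the paper: obtain the surjection $f\colon C\to E$ by composing the Abel--Jacobi embedding with the projection $J(C)\to E$, apply Riemann--Hurwitz to get a ramification divisor of degree $2$, and read off the local models $z\mapsto z^2$ or $z\mapsto z^3$ from the two possible partitions $2=1+1$ and $2=2$. The only cosmetic difference is that the paper further subdivides the two-simple-ramification-points case according to whether they lie over the same or distinct points of $E$, a refinement not needed for the statement itself.
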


\begin{proof}
Let $C$ be a genus $2$ smooth curve such that there is an elliptic curve $E$ and a surjection $J(C)\to E$. The inclusion of $C$ in $J(C)$ induces
a surjection $f\colon C\to E$. Set $n:= \deg(f)>1$. Since $K_E\cong \mathcal O_E$ and $\deg(K_C)=2$, 
the Riemann--Hurwitz formula gives that the ramification divisor has degree $2$. Hence
one of the following $3$ cases occur:
\begin{enumerate}
\item there are $p,q\in E$ such that $p\ne q$, $\#f^{-1}(o) =n$ for all $o\in E\setminus \{p,q\}$, $\#f^{-1}(p) =\#f^{-1}(q) =n-1$. 
\item $n\ge 3$ and there is $p\in E$ such that $\#f^{-1}(o) =n$ for all $o\in E\setminus \{p\}$, $\#f^{-1}(p) = n-2$ and over $p$ there is a unique $a_p\in C$ at which $f$ ramifies.
\item $n\ge 4$ and there is $p\in E$ such that $\#f^{-1}(o) =n$ for all $o\in E\setminus \{p\}$, $\#f^{-1}(p) = n-2$ and over $p$ there are two distinct points $o_1,o_2\in C$ at which $f$ ramifies.
\end{enumerate}
In case (1) at the unique ramification point of $C$ over $p$ and the unique ramification point of $C$ over $q$ in local coordinates $f$ is the map $z\mapsto z^2$.
In case (2) at the unique ramification point of $C$ over $p$ in local coordinates $f$ is the map $z\mapsto z^3$.
In case (3) at each of the $2$ ramification points of $C$ over $p$ in local coordinates $f$ is the map $z\mapsto z^2$.
\end{proof}

\begin{remark}By Hurwitz existence theorem, for each elliptic curve $E$, $n$, 
and choice of  ramification points of $E$ ($p$ and $q$) in case (1), $p$ in cases (2) and (3),
 there, up to isomorphisms of pairs
and isomorphisms of $E$ at least one and at most finitely many pairs $(C,f)$. 
Now assume that we start with the elliptic curve  $(E,O_E)$ in case (1).
 Assuming that $p=O_E$, then for $q$ we have a $1$-dimensional family of possible $q$'s. 
 In cases (2) and (3) we may take $p=O_E$ and we do not have another parameter for the pairs $(C,f)$.
Deforming $E$ within elliptic curves we get another $1$-dimensional parameter for curve $C$ with a degree $n$ covering, the $j$-invariant.
\end{remark}

\begin{remark}
Assume that the degree $n$ covering $f\colon C\to E$ is induced by the action of a finite group $G$ of automorphisms of $C$, say $E =C/G$.
 In case (1) we have $n=2$ and hence $G\cong \mathbb Z/2\mathbb Z$. In case (2) we have $n=3$ and 
 $G\cong \mathbb Z/3\mathbb Z$. In case (3) we have $n=4$, the covering is 2 to 1 covering $u\colon C\to E_1$ with $E_1$ an elliptic curve followed
by a 2 to 1 covering $E_1\to E$. Hence (assuming that it comes from a quotient $E =C/G$), if the covering is induced by a the action of 
a finite group, case (3) gives nothing new.
\end{remark}

\section{Embedding the fibered surface in a $CY_3$}\label{totomega}

  In this geometric section we show existence of fibered surfaces containing a prescribed number 
of nodal fibers, discuss  hyperelliptic families with nodes, and embed fibered surfaces in Calabi--Yau threefolds.

\subsection{Constructing $CY_3$'s with prescribed $c_2$}

We use the adjunction formula. 
Let $X$ be a smooth algebraic variety or smooth complex manifold and $Y$ be a smooth subvariety of $X$.
 Denote by $i\colon Y \rightarrow X$ the inclusion map and  by $\mathcal I$ the ideal sheaf of $Y$ in $X$. 
 The conormal bundle of $Y$ in $X$ is ${\mathcal  {I}}/{\mathcal  {I}}^{2}$ and the conormal exact sequence for $i$ is
\begin{equation}\label{conormal} 0\to {\mathcal  {I}}/{\mathcal  {I}}^{2}\to i^{*}\Omega _{X}\to \Omega _{Y}\to 0, \end{equation}
where $\Omega$  denotes a cotangent bundle. The determinant of this exact sequence is a natural isomorphism
$$\omega _{Y}=i^{*}\omega _{X}\otimes \operatorname {det}({\mathcal  {I}}/{\mathcal  {I}}^{2})^{\vee }$$
where 
$\vee$  denotes the dual of a line bundle, and $\omega$ denotes de canonical bundle (in some references $\omega$ is  
 denoted by $K$).

\begin{example}\label{aaa1}
Now we take $X$ to be equal to the total space of the canonical bundle of a complex surface $Y$ (4 real dimensions),
 that is 
$X= \Tot(\omega_Y)$. We claim that $X$ is Calabi--Yau.
The fiber of $X \rightarrow Y$ at a point $q \in Y$ is $\mathbb C$ because 
$X$ is a line bundle over $Y$.

 Indeed, observe that the normal bundle of $Y$ inside $X$ will be exactly equal to $\omega_Y$, 
that is, 
$$({\mathcal  {I}}/{\mathcal  {I}}^{2})^\vee = \omega_Y$$
and
since the surface $Y$ is embedded in the threefold $X$ in codimension 1, we also have
$$\operatorname{det}({{\mathcal  {I}}/{\mathcal  {I}}^{2}})^\vee = \omega_Y.$$
Plugging this information into the previous formula, we obtain
$$\omega _{Y}=i^{*}\omega _{X}\otimes \omega_Y.$$
Therefore $i^{*}\omega _{X}$ must be trivial, and because it is a line
bundle whose restriction to the zero section  is trivial,
 we also obtain that $\omega _{X}$ is trivial. This shows that $X$ is a Calabi--Yau threefold. 
\end{example}

    \begin{remark}
  Take $(Y,L)$ as in Proposition \ref{aa3} and set $\alpha:= L\cdot L$ (the self-intersection number). Fix $C\in V(L,k)$ and call $s\in H^0(L)$ an equation of $C$.
  Let $V\subset H^0(L)$ be a general $2$-dimensional linear subspace containing $s$. Note that $V$ induces a non-constant rational map $\psi$ from $Y$ to $\mathbb{P}^1$. The generality
  of $V$ means that the base locus of $V$ is given by $\alpha$ distinct points, none of them being contained in $C$. Let $Y_1$ be the blowing-up of $Y$ at these $\alpha$ points. The rational map $\psi$ induces a fibration $f: Y_1\to \mathbb{P}^1$ with $C$ as one of its fibers, while the general fiber of $f$ is a smooth curve. By Example \ref{aaa1} $\Tot(\omega_{Y_1})$ has trivial canonical bundle.
  \end{remark}

\begin{example}\label{aa2} 
Continuing from the previous example, assume in addition that
$Y$ is a fibered surface. Hence, we have that $Y$ is the total space of a fibration
$Y\rightarrow C $ for some curve $C$. Then for each point $p\in C$ the fiber of $Y$
is a curve $F_p$.
We have  that the fiber over $p$ of the fibrations $\pi\colon X \rightarrow C$
(obtained from the composition of the projections   $X\rightarrow Y$ and $ Y\rightarrow C $) is
$\omega_{F_p},$ that is,  the canonical bundle of $X$ restricted to the fiber $F_p$ gives the canonical bundle
of the $F_p$.
Therefore, the fiber of $p$ is a trivial product $F_p\times \mathbb C$ if and only if $F_p$ is smooth of genus 1.
\end{example}

In the particular case when $Y$ is a locally trivial fibration,
we  write $F \rightarrow Y\rightarrow C $
 and for each $p \in C$ we have the existence of a disc neighbourhood
$p\in D $ such that $Y\vert_D \simeq D \times F$. In this case, then $X= \Tot(\omega_Y)$
may  also be seen as a locally trivial fibration. In fact,  
in principle we write $\mathbb C \rightarrow X \rightarrow Y$, but we
  may also write $\omega_F \rightarrow X\rightarrow C $.

 \begin{lemma}\label{c2}
 The Chern classes of $X = \Tot(\omega_Y)$ are 
 $c_1(X) = c_3(X) =0$ and $c_2(X)  =  -c_1^2(Y)+c_2(Y).$ 
 \end{lemma}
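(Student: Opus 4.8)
The plan is to compute the total Chern class of $X=\Tot(\omega_Y)$ via the projection $p\colon X\to Y$ and the tangent bundle exact sequence for a total space of a line bundle. Since $p$ is an affine bundle (the fibres are copies of $\mathbb C$), it is a homotopy equivalence; in particular $H^*(X)\cong H^*(Y)$ and $p^*$ identifies the two rings, so I will freely suppress $p^*$ in the notation. The relative tangent sheaf $T_{X/Y}$ is the pullback of the line bundle $\omega_Y$ itself (the fibre direction of $\Tot(\omega_Y)$), so the short exact sequence
\begin{equation*}
0\to p^*\omega_Y\to T_X\to p^*T_Y\to 0
\end{equation*}
gives, by Whitney's formula, $c(T_X)=p^*\!\big(c(\omega_Y)\,c(T_Y)\big)$. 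Writing $c_1(\omega_Y)=-c_1(Y)$ (since $\omega_Y=\det\Omega_Y^1$ and $c_1(Y)\ce c_1(T_Y)$), and $c(T_Y)=1+c_1(Y)+c_2(Y)$, I expand
\begin{equation*}
c(T_X)=\big(1-c_1(Y)\big)\big(1+c_1(Y)+c_2(Y)\big)=1+\big(c_2(Y)-c_1^2(Y)\big)+\big(\text{degree }3\big),
\end{equation*}
and the degree-$3$ term is $-c_1(Y)c_2(Y)$, which vanishes because $Y$ is a surface so $H^6(Y)=0$ and $c_1(Y)c_2(Y)$ already lives in $H^6(Y)=0$; under $p^*$ it therefore also vanishes in $H^6(X)$. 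Hence $c_1(X)=0$, $c_2(X)=c_2(Y)-c_1^2(Y)$, and $c_3(X)=0$, as claimed.

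There are really only two things to justify carefully. First, the identification of the relative tangent bundle: for any line bundle $\mathcal L\to Y$ with total space $\pi\colon \Tot(\mathcal L)\to Y$, one has $T_{\Tot(\mathcal L)/Y}\cong \pi^*\mathcal L$ canonically (the vertical tangent space at a point of the fibre over $y$ is canonically the fibre $\mathcal L_y$ of the vector bundle), and the tangent sequence $0\to T_{X/Y}\to T_X\to \pi^*T_Y\to 0$ is the standard sequence for a submersion; this is where I apply it with $\mathcal L=\omega_Y$. Second, I should note that the equality $c_1(X)=0$ is consistent with — and can alternatively be deduced from — Example \ref{aaa1}, where triviality of $\omega_X$ was already established; the Chern-class computation here simply refines that to all $c_i$.

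I do not expect a genuine obstacle: the computation is a one-line application of Whitney's formula once the relative tangent bundle is identified, and the vanishing of $c_3$ is forced by dimension on $Y$ together with the homotopy equivalence $X\simeq Y$. The only point demanding a sentence of care is that Chern classes of $X$ are being read off in $H^*(X)\cong p^*H^*(Y)$, so that "$c_1^2(Y)$" and "$c_2(Y)$" appearing in the statement are understood via this identification; I would state this explicitly at the start of the proof.
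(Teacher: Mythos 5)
Your proposal is correct and follows essentially the same route as the paper: the paper applies the Whitney formula to the normal-bundle sequence $0\to TY\to i^*TX\to (\mathcal I/\mathcal I^2)^\vee\to 0$ restricted to $Y$, with $(\mathcal I/\mathcal I^2)^\vee=\omega_Y$, whereas you use the equivalent relative tangent sequence $0\to p^*\omega_Y\to T_X\to p^*T_Y\to 0$ on $X$; under the homotopy equivalence $p$ and $i$ these give literally the same Chern-polynomial identity $c_t(TX)=c_t(\omega_Y)c_t(TY)$ and the same expansion. Your explicit justification of the vanishing of the degree-3 term (it lies in $H^6(Y)=0$) is a slight refinement of the paper's appeal to the retraction of $X$ onto $Y$, but the substance is identical.
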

 
 \begin{proof} To  calculate the second Chern class of 
 $X= \Tot(\omega_Y)$  consider the exact sequence dual to \ref{conormal}:
$$ 0 \rightarrow TY\rightarrow i^{*}T X\rightarrow
 ({\mathcal  {I}}/{\mathcal  {I}}^{2})^{\vee } \rightarrow  0.$$
 Given that the Chern polynomial is multiplicative and $ ({\mathcal  {I}}/{\mathcal  {I}}^{2})^{\vee }=\omega_Y$, we obtain:
 $$c_t(TX) = c_t(\omega_Y) c_t(TY).$$
 Since $X$ retracts to the surface $Y$, we have that $c_3(TX) = 0$, giving
  $$1+c_1(TX)t+c_2(TX)t^2  = (1+c_1(\omega_Y)t)(1+ c_1(TY)t+c_2(TY)t^2).$$
  Since $c_1(\omega_Y)= -c_1(TY)$, we have
   $$c_1(TX)  = c_1(\omega_Y)+c_1(TY)= 0 ,$$
   which we already knew because $X$ is Calabi--Yau. Moreover
     $$c_2(TX)  = c_1(\omega_Y)c_1(TY)+c_2(TY)= -c_1^2(TY)+c_2(TY),$$ 
  or equivalently
  $$c_2(X)  =  -c_1^2(Y)+c_2(Y).$$ 
 \end{proof}
 
  In Example \ref{aaa1} we may take as $Y$ an arbitrary smooth surface. Example \ref{aa2} explained the case in which $Y$ is a fibered surface. 
  Here is the simplest  illustration of  proposition \ref{c2}.

\begin{example} Suppose $Y= \mathbb P^1 \times \mathbb P^1 $ and we call $B$ the base 
and $F$ the fiber. Then we have
$c_1(Y) = 2B+2F $, therefore 
$$c_1^2(Y)= (2B+2F)^2 = 4B^2+8B\cdot F + 4F^2 = 8$$
while $c_2(Y) = c_1(\mathbb P^1) \times c_1(\mathbb P^1) = 4$.
Now, taking the Calabi--Yau threefold $X= \Tot(\omega_Y)$ we obtain $c_1(X)=0=c_3(X)$ and 
$$c_2(X) = -c_1^2(Y)+c_2(Y) = - 8+4 = -4.$$
\end{example}

To state the next result, 
we will use the concept of holomorphic Euler characteristic $\chi(\mathcal O_Y)$, see 
definition \ref{chi}.

 For a smooth compact complex surface $X$ (i.e. $\dim_{\mathbb C}X=2$), Noether's formula gives 
 $$12\chi(\mathcal O_X) =c_1^2(X)+c_2(X)= (K\cdot K) +e(X)$$ 
 where $K$ is the canonical divisor class.
 In contrast, we also observe that if $C$ is a smooth compact curve (i.e. a Riemann surface) and
 hence the two numbers $h^0(\Omega ^1_C)$ and $h^1(\mathcal O_C)$ are the same, 
 we have that the topological and holomorphic Euler characteristics satisfy
 $2\chi(\mathcal O_C)=\chi(C) .$ 
  
  \begin{theorem}\label{ue1}
  If $Y$ a smooth projective surface and $X= \Tot(\omega_Y)$, then $c_2(X) $ is even.
  \end{theorem}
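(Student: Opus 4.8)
The plan is to invoke Lemma \ref{c2}, which gives $c_2(X) = -c_1^2(Y) + c_2(Y) = c_2(Y) - c_1^2(Y)$, and reduce the parity question to a statement purely about the smooth projective surface $Y$. So it suffices to show that $c_2(Y) - c_1^2(Y)$ is even for every smooth projective surface, or equivalently that $c_2(Y) \equiv c_1^2(Y) \pmod 2$.

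For this I would use Noether's formula, recalled just before the theorem in the form $12\chi(\mathcal O_Y) = c_1^2(Y) + c_2(Y)$. Since the left-hand side is $12\chi(\mathcal O_Y)$, it is in particular divisible by $2$ (indeed by $12$), so $c_1^2(Y) + c_2(Y)$ is even. But over $\mathbb Z$ we have $a + b \equiv a - b \pmod 2$ for any integers $a,b$, so $c_1^2(Y) - c_2(Y)$ is even as well, hence $c_2(Y) - c_1^2(Y) = -\bigl(c_1^2(Y) - c_2(Y)\bigr)$ is even. Combining with Lemma \ref{c2} gives that $c_2(X)$ is even.

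I do not anticipate any real obstacle here: the only inputs are Lemma \ref{c2} (already proved in the excerpt) and Noether's formula (already quoted in the excerpt), and the parity manipulation is elementary. The one point worth a sentence of care is that one should not even need the full strength of Noether's formula: the relevant fact is just the $\bmod 2$ congruence $c_1^2(Y) \equiv c_2(Y)$, which is the Chern number reduction of the fact that $w_2(Y)^2 = c_2(Y) \bmod 2$ and $c_1(Y) \bmod 2 = w_2(Y)$ for a complex manifold, but since Noether's formula is already on the table it is cleanest to cite it directly. I would write the proof in three short lines: quote Lemma \ref{c2}, quote Noether, observe the parity swap, conclude.
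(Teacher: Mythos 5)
Your proposal is correct and follows essentially the same route as the paper: reduce via Lemma \ref{c2} to the parity of $-c_1^2(Y)+c_2(Y)$ and then apply Noether's formula $12\chi(\mathcal O_Y)=c_1^2(Y)+c_2(Y)$. The paper makes the same deduction by writing $-c_1^2(Y)+c_2(Y)=12\chi(\mathcal O_Y)-2c_1^2(Y)$, which is just a slightly different phrasing of your mod $2$ parity swap.
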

  \begin{proof}
For  any smooth and connected complex surface $Y$, we have 
$$
12\chi(\mathcal O_Y) = c_1(Y)^2 +c_2(Y),
$$
see \cite[Eq.(4) p.26]{BHPV} or \cite[p.472]{GH}.
Consequently, we have 
  $$-c_1(Y)^2 +c_2(Y) =12\chi(\mathcal O_Y)-2c_1(Y)^2.$$ Hence $$-c_1^2(Y)+c_2(Y)\equiv 0\mod{2}.$$
  \end{proof}

    \begin{theorem} \label{rat}
  Every even integer $\geq -6$ occurs as $c_2(X)$ for some rational 
  Calabi--Yau threefold of the form  $X = \Tot(\omega_Y)$.
  \end{theorem}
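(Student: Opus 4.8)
The plan is to combine Lemma~\ref{c2} with Noether's formula $12\chi(\mathcal O_Y)=c_1^2(Y)+c_2(Y)$ and then to realise the required Chern numbers by blowing up $\mathbb P^2$.

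First I would note that for any smooth projective \emph{rational} surface $Y$ one has $h^1(\mathcal O_Y)=h^2(\mathcal O_Y)=0$, hence $\chi(\mathcal O_Y)=1$. Substituting this into Noether's formula gives $c_2(Y)=12-c_1^2(Y)$, and then Lemma~\ref{c2} yields
$$c_2(X)=-c_1^2(Y)+c_2(Y)=12-2c_1^2(Y)=12-2K_Y^2,$$
where $K_Y$ is the canonical class. So everything is reduced to producing rational surfaces with prescribed values of $K_Y^2$.

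Next I would exhibit such surfaces. For $Y=\mathbb P^2$ we have $K_Y^2=9$, and, as recalled after Noether's formula in the text, blowing up a point decreases $c_1^2$ by $1$. Hence for every integer $k\ge 0$ the surface $Y_k\ce \Bl_k\mathbb P^2$ obtained by blowing up $\mathbb P^2$ at $k$ distinct points is a smooth projective rational surface with $K_{Y_k}^2=9-k$. By Example~\ref{aaa1} the threefold $X_k\ce \Tot(\omega_{Y_k})$ is Calabi--Yau, and it is rational because it is birational to $Y_k\times\mathbb C$, hence to $\mathbb P^3$. From the displayed formula,
$$c_2(X_k)=12-2(9-k)=2k-6.$$

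Finally, given any even integer $n\ge -6$, put $k\ce (n+6)/2$, a non-negative integer; then $c_2(X_k)=n$ and $X_k$ is a rational Calabi--Yau threefold of the required form, which proves the theorem. I do not anticipate a real obstacle: the only slightly delicate points are that $\chi(\mathcal O_Y)=1$ for rational surfaces (a classical fact) and the blow-up bookkeeping for $K_Y^2$, which is exactly the one already quoted in the excerpt. One could add the remark that $K_Y^2\le 9$ for every rational surface (the minimal rational surfaces being $\mathbb P^2$ and the Hirzebruch surfaces), so that $-6$ is in fact the smallest value of $c_2(X)$ obtainable this way, but this is not needed for the statement as given.
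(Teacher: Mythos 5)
Your proposal is correct and follows essentially the same route as the paper: both reduce to blowing up $\mathbb P^2$ at $k$ points and observing that each blow-up increases $-c_1^2(Y)+c_2(Y)$ by $2$, starting from the value $-6$ for $\mathbb P^2$ (your detour through Noether's formula with $\chi(\mathcal O_Y)=1$ is just a repackaging of the same bookkeeping). Your explicit remark that $X=\Tot(\omega_Y)$ is rational because it is birational to $Y\times\mathbb C$ is a small addition the paper leaves implicit.
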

  \begin{proof}
  We may use either the cases starting from $\mathbb P^2$ as in example \ref{ue2}
 or else the examples starting from a Hirzebruch surface as in example \ref{ue3} below. 
  \end{proof}

  \begin{example}\label{ue2}
  Take any smooth and connected complex surface $W$ and let $Y$ be obtained from $W$ making $x$ blowing ups.
  We have $c_1(Y)^2 =c_1(W)^2+x$ and $c_2(Y) =c_2(W) +x$, therefore
  \begin{equation}\label{blowups} -c_1^2(Y)+c_2(Y) =-c_1^2(W)+c_2(W) +2x.\end{equation}
  Since $c_1(\mathbb P^2)^2=9$ and $c_2(\mathbb P^2) = 3$, we have $-c_1(\mathbb P^2)^2+c_2(\mathbb P^2) =-6$. Making some blowing ups of $\mathbb P^2$ we get
  that {\bf every even integer $\ge -6$} occurs as $-c_1^2(Y)+c_2(Y)$ for some smooth rational surface.
  \end{example}
  
  \begin{example}\label{ue3}
  Let $Y$ be the Hirzebruch surface $F_e$, $e\ge 0$. Since $Y$ is smooth and rational, $\chi(\mathcal O_Y)=1$. We take a fiber $f$ of a ruling of $F_e$ and a section $h$ of its ruling with $h^2=-e$ and a $\mathbb Z$-basis of $\mathrm(Y)
  =\mathbb Z h+\mathbb Z f$. We have $f^2=0$ and $h\cdot f=1$. The adjunction formula gives 
  $\omega_Y\cong \mathcal O_Y(-2h-(2+e)f)$. Thus $c_1(Y)^2 =-4e +8+4e =8$.
  Every smooth rational surface is obtained either from $\mathbb P^2$ or from a Hirzebruch surface $F_e$, $e>0$, making some blowing ups. By example \ref{ue2} the integers
  appearing as $-c_1^2(Y)+c_2(Y)$ for some smooth rational surface are exactly the even integers $\ge -6$.
  \end{example}
  
  \begin{theorem}\label{uet}
  Every even integer occurs as $c_2(X)$ for some Calabi--Yau threefold of the form  $X = \Tot(\omega_Y)$.
  \end{theorem}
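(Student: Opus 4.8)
The plan is to bootstrap from Theorem~\ref{rat}, which already realizes every even integer $\ge -6$ as $c_2(X)$ with $X=\Tot(\omega_Y)$ and $Y$ a rational surface, and then produce all sufficiently negative even integers by a separate family. By Lemma~\ref{c2} one always has $c_2(X)=-c_1^2(Y)+c_2(Y)$, which by Noether's formula equals $12\chi(\mathcal O_Y)-2c_1^2(Y)$; so to reach large negative values I want surfaces $Y$ with $c_1^2(Y)$ large compared to $\chi(\mathcal O_Y)$, and the cleanest such surfaces are products of curves of positive genus.

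Concretely, I would fix smooth projective curves $C_1,C_2$ of genera $g_1,g_2$ and set $Y=C_1\times C_2$. Since $TY\cong p_1^\ast TC_1\oplus p_2^\ast TC_2$, the total Chern class factors; writing $f_i$ for the class of a fiber of the projection $p_i$ and using $f_1^2=f_2^2=0$, $f_1\cdot f_2=1$, a one-line computation gives $c_1^2(Y)=8(1-g_1)(1-g_2)$ and $c_2(Y)=e(Y)=4(1-g_1)(1-g_2)$. By Example~\ref{aaa1} the threefold $X=\Tot(\omega_Y)$ is Calabi--Yau, and Lemma~\ref{c2} then gives
$$c_2(X)=-c_1^2(Y)+c_2(Y)=-4(1-g_1)(1-g_2).$$
Taking $g_1=2$ and letting $g_2$ run over $\{2,3,4,\dots\}$ yields $c_2(X)=-4(g_2-1)$, so every negative multiple of $4$ is realized; in particular every even integer $N\le -8$ with $N\equiv 0\pmod 4$.

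To fill the remaining residue class I would blow up: starting from $Y=C_1\times C_2$ with $g_1=2$ and blowing up one point produces a smooth projective surface $Y'$ with, by \eqref{blowups}, $c_2(\Tot(\omega_{Y'}))=-4(g_2-1)+2$, which runs over all integers $\equiv 2\pmod 4$ that are $\le -2$ as $g_2$ varies over $\{2,3,4,\dots\}$. Combined with Theorem~\ref{rat}, this exhausts the even integers. The computation is entirely routine; the only point requiring a moment's thought is that a product of curves by itself produces only multiples of $4$ (because $c_2(X)=-4(1-g_1)(1-g_2)$), which is exactly why the single blow-up --- shifting $c_2(X)$ by $+2$ --- is needed to reach the class $2\pmod 4$. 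As a bonus, since one factor has genus $2$, each such $Y$ (and $Y'$) is a hyperelliptically fibered surface, so these examples also fit the theme of the paper, although this is not needed for the statement.
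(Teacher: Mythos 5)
Your proposal is correct, and it shares the overall skeleton of the paper's own proof --- Theorem~\ref{rat} handles the even integers $\ge -6$, a family of surfaces of general type drives $-c_1^2(Y)+c_2(Y)$ to $-\infty$, and single blow-ups via \eqref{blowups} shift by $+2$ to fill the remaining residue class --- but it realizes the crucial middle step by a genuinely more elementary route. The paper reaches arbitrarily negative values by invoking sequences of minimal surfaces of general type with $c_1^2(Y_n)\ge 2c_2(Y_n)$ and $c_2(Y_n)\to\infty$, citing the nontrivial geography results of \cite{RU} and \cite{Som}; you instead take $Y=C_1\times C_2$ with $g_1=2$, for which your computation $c_1^2(Y)=8(g_1-1)(g_2-1)$, $c_2(Y)=4(g_1-1)(g_2-1)$, hence $c_2(X)=-4(g_2-1)$, is correct, and these products sit exactly on the boundary $c_1^2=2c_2$ of the region the paper appeals to. This makes the argument self-contained and the surfaces completely explicit, and the bookkeeping (multiples of $4$ from the products, the class $2\bmod 4$ from one blow-up, everything $\ge -6$ from Theorem~\ref{rat}) does exhaust the even integers. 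Your closing observation is also a genuine advantage: the projection $C_1\times C_2\to C_2$ exhibits $Y$ and its blow-up as genus-$2$, hence hyperelliptic, fibrations, which matches the abstract's claim that each even value of $c_2(X)$ is realized with $Y$ hyperelliptically fibered --- something the rational surfaces used in Theorem~\ref{rat} do not visibly provide. What the paper's heavier citations buy is extra flexibility (e.g.\ simply connected surfaces, or prescribed Chern slopes dense in $[2,3]$), none of which is needed for the statement as written.
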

  
  \begin{proof}\label{ue4}
  Recall that $c_1(Y)^2\le 3c_2(Y)$ for any smooth surface of general type $Y$ and that equality holds exactly for a class of surfaces (ball quotients) which exist, but
  they are difficult to construct \cite{Ishi,Ishi0}. If $Y$ is a minimal surface of general type, then $c_1(Y)^2>0$. Thus, 
  when $Y$ is a minimal surface of general type and $2c_2(Y)\le c_1(Y)^2\le 3c_2(Y)$, 
  then the even integer $-c_1(Y)^2+c_2(Y)$ is negative,  and we obtain that its modulus goes to $+\infty$,  
  provided  we get a sequence of surfaces $Y_n$ such that $c_1(Y_n)^2\ge 2c_2(Y_n)$ and $\lim_n c_2(Y_n)=+\infty$. 
  Indeed, in such case
  $$\lim_n (c_1(Y_n)^2-c_2(Y_n))\geq \lim_n (2c_2(Y_n)-c_2(Y_n)) \geq \lim_n c_2(Y_n)= \infty.$$ 
  Such a sequence of surfaces (even with additional properties) is constructed in \cite{RU,Som}.  
  So,  we see that $c_2(X)$ can be made arbitrarily low by using minimal surfaces $Y$ of general type, and
   blowing up points of $Y$  (if needed to fill up gaps), using
    formula \eqref{blowups}, we verify that all negative integers can occur as 
  $c_2(X)$ using (possibly nonminimal) surfaces of general type.
  Finally,   combining with Theorems \ref{ue1} and \ref{rat}, 
  we conclude that 
 {\bf every even integer} is of the form $-c_1(Y)^2+c_2(Y)$ for some smooth projective surface.
 \end{proof}

  So, we conclude that our collection of  Calabi--Yau threefolds $X$  contains representatives of  many types of 
  compact 4-cycles.
We end this section with  some comments about bounds on numerical invariants.  
  If $Y\rightarrow C$ is a fibered surface with fiber genus $g_1$ and base genus $g_2$ we have 
  $\chi(\mathcal O_X) \ge 2(g_1-1)(g_2-1).$
see \cite[Sec.\thinspace 2.2]{BGS}. For the construction of families of hyperelliptic curves 
with prescribed number of nodal fibers, see  \cite[Sec.\thinspace 2.3]{BGS}, 
where 3 types of construction of such families were presented.

  \section{Applications to constructions in F-theory I}\label{up}

\subsection{The hyperelliptic fibration occurs upstairs}
  
Nonperturbative solutions of type IIB supergravity are characterized in F-theory by the nontrivial profile of the axio-dilaton 
$\lambda$ given in (\ref{Axio-dilaton}), undergoing  a $SL(2,\mathbb{R})$ transformation (and in high energies $SL(2,\mathbb{Z})$) 
  that implies a duality between strong and  weak coupling regimes. The axio-dilaton 
  may be regarded as  the complex structure parameter of an auxiliary torus, namely an elliptic curve. 
  Consequently, F-theory describes type IIB string theory in terms of complex elliptic fibrations. We know that F-theory on an elliptically fibered   $CY_2$ (e.g., a $K3$ surface) is equivalent to type IIB string theory compactified on the base of this elliptic fibration (e.g., a $\mathbb{P}^1$). When moving around the points of the base manifold where the fiber degenerates, $\lambda$ undergoes a nontrivial monodromy 
  \begin{eqnarray*}
M= \begin{pmatrix}
    1 + pq & p^2 \\
    -q^2 & 1-pq
\end{pmatrix}    
\end{eqnarray*}
  contained in the parabolic subgroup of the U-duality group $SL(2,\mathbb Z)$, indicating the presence of $(p,q)$ 7-branes. 
  
Here we will consider the {\it unconventional} case when F-theory is formulated on a noncompact $CY_3$ given by $X=\Tot(\omega_Y)$, 
where a $Y$ is a hyperelliptic fibration over $\mathbb{P}^1$ whose fiber at a point $p$, denoted by $\Sigma_p$, 
 may degenerate in certain points of the base manifold. From example \ref{aa2}, we have that 
\[
\begin{tikzcd}[swap]
	\Tot(\omega_{\Sigma_p}) \arrow[]{r}[] {}
	& X \arrow[]{d}[right]{} \\  
		& \mathbb{P}^1 
\end{tikzcd}
\]
where,  the fibers vary. However, outside the set of singular fibers, 
we have a locally trivial fibration.

 The interest of this study relies on the fact that it may be regarded a 
natural extension of F-theory.  Indeed, we could also consider
\[
\begin{tikzcd}[swap]
	\Tot(\omega_{\Sigma_p}) \arrow[]{r}[] {}
	& CY_4 \arrow[]{d}[right]{} \\  
		& \mathbb{P}^1 \times T^2
\end{tikzcd}
\]
where $CY_4=X\times T^2$.


\subsection{The axio-dilaton in  hyperelliptic F-theory}\label{AX}
To obtain a description of F-theory  containing a hyperelliptic fibration 
upstairs, it is first of all necessary to determine the realization  of the axio-dilaton in 
this proposed new formulation.
The axio-dilaton distinguishes itself from  other moduli parameters 
that appear in string compactifications by the fact that the imaginary 
part of its contribution determines the string coupling, and it is 
strictly positive. It transforms under S-duality as
 \begin{equation*}
 \lambda^{'} =\frac{a\lambda+b}{c\lambda+d}  
 \end{equation*}
 with 
 $\tiny
 \begin{pmatrix}
        a & b \\ c& d
    \end{pmatrix}\in SL(2,\mathbb{Z}).
$
Therefore, it is a fundamental field that must be identified. 
In F-theory, the axio-dilaton is interpreted geometrically in terms of an elliptic fibration.
Therefore, for a hyperelliptic fibration, such as a surface fibered by genus 2 curves, where more moduli 
parameters characterize the curves,
 the field associated with the axio-dilaton must be identified. 

Let us consider a hyperelliptic fibration of genus two. The period matrix $\Omega$ describes the variations of the complex scalar fields on the base manifold
 \begin{eqnarray}\label{Periodmatrix}
\Omega(z) &=& \begin{pmatrix}
    \tau & \beta  \\
    \beta & \sigma
\end{pmatrix}
\end{eqnarray}
with 
\begin{eqnarray*}
    \tau &=& \tau_1 + i\tau_2, \\
    \sigma &=& \sigma_1 + i\sigma_2, \\
    \beta &=& \beta_1 + i \beta_2.
\end{eqnarray*}
The fiber over a generic point $z_0$ of the base manifold can be interpreted as the direct sum of two genus-one tori. Then $\tau$ and $\sigma$
 are the Teichm\"uller parameters of the 2-torus and $\beta$ the complex scalar field associated with the sewing of the two tori. 
 Therefore, $\tau_2>0$ and $\sigma_2>0$. In general, $\tau$, $\sigma$, and $\beta$ (consequently, $\Omega$) will depend on the coordinates of the base manifold.
 
 Then, $\Omega$ transforms as
 \begin{eqnarray*}
 \Omega = \frac{A\Omega + B}{C\Omega + D}
 \end{eqnarray*}
 where $A, B, C$, and $D$ are $2\times 2$ matrices which define a matrix of $Sp(4,\mathbb{Z})$, the mapping class group of a genus two Riemann surface via
 \begin{eqnarray*}
M = \begin{pmatrix}
    A & B  \\
    C & D
\end{pmatrix}
\end{eqnarray*}
such that
\begin{eqnarray*}
      M\begin{pmatrix}
     0 & \mathbb{I}  \\
     -\mathbb{I} & 0
\end{pmatrix} M^T = \begin{pmatrix}
     0 & \mathbb{I}  \\
     -\mathbb{I} & 0
\end{pmatrix}. \label{Symplecticgroup}
\end{eqnarray*}

We find that a natural choice is to identify the axio-dilaton with the complex 
scalar associated with the determinant of the period matrix $\Omega$, 
which carries  the topologic/geometric information of the 
hyperelliptic fiber.  Since the period matrix transforms according to 
$Sp(4,\mathbb{Z})$, it becomes necessary to restrict the choice of allowed period 
matrices to those whose  determinant  transforms under $SL(2,\mathbb{Z})$. 
Therefore, imposing this 
condition implies on a restriction on the allowed families of hyperelliptic 
fibrations.  One may argue that considering  this interpretation of the axio-dilaton 
is a natural choice from the physics perspective, since the 
Teichm\"uller parameter of an elliptic fibration also corresponds to 
the determinant of a one-times-one period matrix associated with the 
genus 1 curve.

Hence for the hyperelliptic fibration, we propose that the 
axio-dilaton is given by $$\lambda=\mbox{det}(\Omega).$$

Therefore, the corresponding transformation on $\lambda$ which guarantees that it transforms with $SL(2,\mathbb Z)$ 
will be the one inherited from the period matrix, that is
\begin{eqnarray}
    \lambda'&=& \det{\left(\frac{A\Omega+B}{C\Omega+D}\right)}
    = \frac{a\lambda + b}{c\lambda+d}, \label{axio-dilaton trans}
\end{eqnarray}
such that $a,b,c$, and $d$ are entries of a $SL(2,\mathbb{Z})$ matrix given by
\begin{eqnarray}
    a &=& \mbox{det}(A), \label{a} \\
    c &=& \mbox{det}(C),\label{c}  \\
    b &=& \tau_1 (B_{22}A_{11}-B_{12}A_{21}) + \sigma_1(B_{11}A_{22}-B_{21}A_{12}) , \nonumber \\
    &+& \beta_1(B_{22}A_{12}+B_{11}A_{21} - B_{21}A_{11}-B_{12}A_{22}) + \det{B}, \label{b} \\
    d &=& \tau_1(D_{22}C_{11}-D_{12}C_{21}) + \sigma_1(D_{11}C_{22}-D_{21}C_{12}), \nonumber \\
    &+& \beta_1(D_{22}C_{12}+D_{11}C_{21} - D_{21}C_{11}-D_{12}C_{22}) + \det{D}, \label{d} 
\end{eqnarray}
with the following conditions on $\tau_2$ and $\sigma_2 $ that guaranty that the imaginary parts vanish and 
on   $\tau_1$ and $\sigma_1 $ such that 
the real parts are integers
\begin{eqnarray}
    \tau_1 &=&\frac{1}{k_2}\left[d'(B_{11}A_{22}-B_{21}A_{12})-b'(D_{11}C_{22}-D_{21}C_{12}) \right. 
     + \left. k_3 \beta_1 \right] \label{Retau} \\
     \tau_2 &=&  \frac{k_3}{k_2}\beta_2 \label{Imtau} \\
     \sigma_1 &=& \frac{1}{k_2}\left[b'(D_{22}C_{11}-D_{12}C_{21})-d'(B_{22}A_{11}-B_{12}A_{21}) \right. 
     + \left. k_1 \beta_1 \right] \label{Resigma} \\
     \sigma_2 &=& \frac{k_1}{k_2}\beta_2 \label{Imsigma}
\end{eqnarray}
where $b'=(b-\det{B})$,  $d'=(d-\det{D})$
and
\begin{eqnarray*}
    k_1 &=& (D_{11}C_{21}+D_{22}C_{12}-D_{21}C_{11}-D_{12}C_{22})(B_{22}A_{11}-B_{12}A_{21})
    \nonumber \\ &-& (B_{11}A_{21} +B_{22}A_{12}- B_{21}A_{11}-B_{12}A_{22})(D_{22}C_{11}-D_{12}C_{21}) ,  \\
    k_2 &=& (D_{22}C_{11}-D_{12}C_{21})(B_{11}A_{22} - B_{21}A_{12}) \nonumber \\
    &-& (B_{22}A_{11}-B_{12}A_{21})(D_{11}C_{22}-D_{21}C_{12}), \\
    k_3 &=& (B_{11}A_{21} +B_{22}A_{12}- B_{21}A_{11}-B_{12}A_{22})(D_{11}C_{22}-D_{21}C_{12}) \nonumber \\ &-&  (D_{11}C_{21}+D_{22}C_{12}-D_{21}C_{11}-D_{12}C_{22})(B_{11}A_{22}-B_{21}A_{12}).
\end{eqnarray*}
The integers $k_1,k_2$, and $k_3$ are written in terms of the entries 
of the symplectic matrix $M$. Let us also notice that as 
$\tau_2,\sigma_2 > 0$, then $\beta_2\neq 0$, meaning that we can 
consider the hyperelliptic curve in terms of the sewing of two tori. 
It can not be factorized as the product of two torus.

Therefore, we propose that the axio-dilaton is written in terms of the 
three complex scalars that characterize  as 
the determinant of the period matrix of the genus 2 curve. This is a different approach 
from the one used in \cite{MMP}, where the axio-dilaton is identified 
with the single  entry  $\tau$ of the period matrix.  However, it is important to 
notice that the number of degrees of freedom of our extension of   
F-theory is the same as that formulated on an elliptic fibration. It 
can be seen from \eqref{Retau}-\eqref{Imsigma}, that in our case, 
$\tau$, $\sigma$ and $\beta$ are not independent as in \cite{MMP}. We 
will give some examples of hyperelliptic curves satisfying these 
conditions in the following. Consequently, due to the split Jacobian 
property, there exists a surjection from these families of 
hyperelliptic curves fibered over the base manifold, to elliptic 
curves. However, we would like to emphasize, that this does not 
correspond to a fancy re-writing of an elliptic fibration. New types 
of singularities appear that do not exist in the elliptic case and the 
physical information around those singularities is also different.

After imposing these conditions, the symplectic matrix will depend on 
7 independent parameters.

Before presenting some particular examples, let us emphasize the 
existence of solutions to our proposal.

\begin{theorem}\label{prop10}
Let $Y$ be a hyperelliptic fibration with a period matrix $\Omega$. Then $\lambda=\det{\Omega}$ will transform as the axio-dilaton (\ref{axio-dilaton trans}) if and only if
\begin{equation*} \label{ph}
    \Omega(z) =  
       \begin{pmatrix}
        \frac{n_1}{k_2}  &  0  \\
        0   & \frac{n_2}{k_2}
    \end{pmatrix} 
     +\begin{pmatrix}
       \frac{k_3}{k_2} &  1  \\
        1  &  \frac{k_1}{k_2}
    \end{pmatrix}\beta(z)   \tag{{\bf ph}}
\end{equation*}
with
\begin{eqnarray*}
    n_1 &=& d'(B_{11}A_{22}-B_{21}A_{12})-b'(D_{11}C_{22}-D_{21}C_{12}) , \\
    n_2 &=& b'(D_{22}C_{11}-D_{12}C_{21})-d'(B_{22}A_{11}-B_{12}A_{21})
\end{eqnarray*}
integers obtained from (\ref{Retau}) and (\ref{Resigma}), respectively. 
\end{theorem}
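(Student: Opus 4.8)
The plan is to reduce the whole statement to two elementary $2\times 2$ linear systems. Recall that for the symplectic matrix $M$ with $2\times 2$ integer blocks $A,B,C,D$ the period matrix transforms by $\Omega\mapsto\Omega'=(A\Omega+B)(C\Omega+D)^{-1}$, so that $\lambda'=\det\Omega'=\det(A\Omega+B)/\det(C\Omega+D)$. Using the $2\times2$ identity $\det(X+Y)=\det X+\operatorname{tr}(\operatorname{adj}(X)Y)+\det Y$ together with $\operatorname{adj}(A\Omega)=\operatorname{adj}(\Omega)\operatorname{adj}(A)$, one gets
\[
\det(A\Omega+B)=(\det A)\,\lambda+\operatorname{tr}\!\big(\operatorname{adj}(\Omega)\operatorname{adj}(A)B\big)+\det B ,
\]
and, with $\operatorname{adj}(\Omega)=\left(\begin{smallmatrix}\sigma&-\beta\\-\beta&\tau\end{smallmatrix}\right)$, a one-line trace computation shows that the middle term is $\mathbb C$-linear in $\tau,\sigma,\beta$ and equals precisely the combination appearing in \eqref{b}, with $\tau_1,\sigma_1,\beta_1$ replaced by the full complex $\tau,\sigma,\beta$; likewise for $C,D$. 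Hence $\lambda'=\frac{a\lambda+b}{c\lambda+d}$ holds identically with $a=\det A$, $c=\det C$ as in \eqref{a}, \eqref{c} (automatically integers) and with $b,d$ the $\mathbb C$-valued affine-linear functions of $\tau,\sigma,\beta$ coming from \eqref{b}, \eqref{d}. The whole content of the theorem is then the characterisation of when $b$ and $d$ may be taken to be fixed integers, that is, real and independent of the base point $z$.

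To isolate those conditions I would split $b,d$ into real and imaginary parts. Since the coefficients of $\tau,\sigma,\beta$ in $b$ and in $d$ are integers, requiring $\operatorname{Im}b=\operatorname{Im}d=0$ is a linear system in $(\tau_2,\sigma_2)$ whose right-hand side is proportional to $\beta_2$, with coefficient determinant equal, up to sign, to the integer $k_2$ defined just before the statement; Cramer's rule then yields exactly \eqref{Imtau}, \eqref{Imsigma}. Requiring instead that $\operatorname{Re}b=b'+\det B$ and $\operatorname{Re}d=d'+\det D$ be the prescribed constants is a linear system in $(\tau_1,\sigma_1)$ with the \emph{same} coefficient determinant $k_2$ and right-hand side affine in $\beta_1$; Cramer's rule yields \eqref{Retau}, \eqref{Resigma}, with $n_1,n_2$ the integers named in the statement. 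Finally these four scalar equations assemble into the single holomorphic matrix identity \eqref{ph}: indeed $\tau=\tau_1+i\tau_2=\frac{n_1}{k_2}+\frac{k_3}{k_2}\beta$ and $\sigma=\sigma_1+i\sigma_2=\frac{n_2}{k_2}+\frac{k_1}{k_2}\beta$, while the off-diagonal entry of $\Omega$ is $\beta$ itself.

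This gives both implications simultaneously. For the forward direction: if $\lambda=\det\Omega$ transforms as the axio-dilaton \eqref{axio-dilaton trans}, then $b$ and $d$ are $z$-independent integers, so the four conditions above are forced and \eqref{ph} follows. For the converse: if $\Omega$ has the form \eqref{ph}, I would read off $\tau_1,\tau_2,\sigma_1,\sigma_2$ as affine functions of $\beta_1,\beta_2$, substitute into \eqref{b}, \eqref{d}, and check that all $\beta$-dependence cancels --- this cancellation is precisely what the definitions of $k_1,k_2,k_3$ are engineered to produce --- leaving $b=b'+\det B\in\mathbb Z$ and $d=d'+\det D\in\mathbb Z$, hence $\lambda'=\frac{a\lambda+b}{c\lambda+d}$ as required. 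The residual normalisation --- $ad-bc=1$, and $\operatorname{Im}\lambda>0$ on the branch in use, which forces $\tau_2,\sigma_2>0$ --- is where the sign conventions in $k_1,k_2,k_3$ are invoked.

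The conceptual content is thin; the main obstacle is the organisational bookkeeping --- expanding the two determinants entry by entry, matching the resulting cofactor combinations with the $2\times2$ minors that occur in $k_1,k_2,k_3,n_1,n_2$, and verifying the $\beta$-cancellation in the converse direction. The one step that is not purely mechanical is confirming that the $(\tau_1,\sigma_1)$-system and the $(\tau_2,\sigma_2)$-system have the same coefficient determinant $k_2$ and that the $\beta_1$-coefficient of the former agrees with the $\beta_2$-coefficient of the latter; this ``parallelism'' of the real and imaginary parts is exactly what allows the four real equations to be repackaged as the single holomorphic relation \eqref{ph}, and it implicitly requires the non-degeneracy $k_2\neq 0$ already built into the statement (whose formulas divide by $k_2$).
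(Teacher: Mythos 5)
Your argument is correct, and it is organised rather differently from the paper's own proof --- in one respect it does strictly more. The printed proof takes the form \eqref{ph} as given, substitutes $\tau=(n_1+k_3\beta)/k_2$ and $\sigma=(n_2+k_1\beta)/k_2$ directly into $\det(A\Omega+B)$ and $\det(C\Omega+D)$, and expands the two $2\times 2$ determinants by brute force to read off $a\det\Omega+b$ and $c\det\Omega+d$; that is, it only verifies the ``if'' direction of the equivalence. Your route --- the identity $\det(X+Y)=\det X+\operatorname{tr}\bigl(\operatorname{adj}(X)Y\bigr)+\det Y$ together with $\operatorname{adj}(A\Omega)=\operatorname{adj}(\Omega)\operatorname{adj}(A)$ --- isolates the part of each determinant that is affine-linear in $(\tau,\sigma,\beta)$ without expanding anything, and then converts the requirement that $b$ and $d$ be fixed integers into two $2\times 2$ Cramer systems (imaginary parts for $(\tau_2,\sigma_2)$, real parts for $(\tau_1,\sigma_1)$), both with the same coefficient determinant $\pm k_2$. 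This buys you the ``only if'' direction, which the paper's proof never addresses, and it makes transparent why the four real conditions \eqref{Retau}--\eqref{Imsigma} repackage into the single holomorphic identity \eqref{ph}. Your remark that the whole argument silently requires the nondegeneracy $k_2\neq 0$ is also a genuine point left implicit in the paper. The only caveat is that you defer the minor-matching and the $\beta$-cancellation in the converse as ``mechanical''; they are (I checked that the trace term reproduces exactly the combinations in \eqref{b} and \eqref{d} and that Cramer's rule returns $\tau_2=\tfrac{k_3}{k_2}\beta_2$, $\sigma_2=\tfrac{k_1}{k_2}\beta_2$ with the paper's sign conventions for $k_1,k_2,k_3$), so nothing is missing in substance.
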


\begin{proof} To start with, we have $\Omega \mapsto \frac{A\Omega+B}{C\Omega +D}$, and we must analyse the effect of this transformation on $\det(\Omega)$. 
Observe that, since \\

$    \det{(A\Omega + B)} =$
\begin{eqnarray*}
&=& \left( (A_{11}\left(\frac{n_1+k_3\beta}{k_2}\right)+A_{12}\beta +B_{11}) \right)\left( (A_{22}\left(\frac{n_2+k_1\beta}{k_2}\right)+A_{21}\beta +B_{22}) \right) \\ &-& \left( (A_{12}\left(\frac{n_2+k_1\beta}{k_2}\right)+A_{11}\beta +B_{12}) \right)\left( (A_{21}\left(\frac{n_1+k_3\beta}{k_2}\right)+A_{22}\beta +B_{21}) \right) \\&=& a\det{\Omega}+b,\\
\end{eqnarray*}

    $ \det{(C\Omega + D)}= $
     \begin{eqnarray*}
     &=& \left( (C_{11}\left(\frac{n_1+k_3\beta}{k_2}\right)+C_{12}\beta +D_{11}) \right)\left( (C_{22}\left(\frac{n_2+k_1\beta}{k_2}\right)+C_{21}\beta +D_{22}) \right) \\ &-& \left( (C_{12}\left(\frac{n_2+k_1\beta}{k_2}\right)+C_{11}\beta +D_{12}) \right)\left( (C_{21}\left(\frac{n_1+k_3\beta}{k_2}\right)+C_{22}\beta +D_{21}) \right) \\&=& c\det{\Omega}+d,
\end{eqnarray*}
we obtain
$$\lambda'=\det(\Omega')=\det\left( \frac{A\Omega+B}{C\Omega +D}\right)=\frac{a\lambda+b}{c\lambda+d},$$
where $a,b,c,d$ are integers given in equations \eqref{a}--\eqref{d}. 
\end{proof}       
Notice that, in the case when $n_1,n_2=0$, then $\tau=\sigma=\frac{k_3}{k_2}\beta$ and we have
$$
    b = \det{(B)}, \quad
    d = \det{(D)}.
$$

\subsection{Hyperelliptic curves satisfying the \ref{ph} condition}\label{phcond}
We present explicit  examples of hyperelliptic curves and hyperelliptic fibrations fulfilling Theorem \ref{prop10}.
\begin{example}\label{ex1}
 From theorem 2 from \cite{Schindler}, it can be seen that the period matrix of a hyperelliptic curve with equation
\begin{eqnarray*}
    y^2 = x^6-1
\end{eqnarray*}
is given by
\begin{eqnarray*} \Omega = i\frac{\sqrt{3}}{3}
\begin{pmatrix}
    2 & 1 \\
    1 & 2
\end{pmatrix}  .  
\end{eqnarray*}
This corresponds to a curve of type $D_{6}(0,1,0)$ \cite{Schindler}. It is easy to see that, 
as it is a pure imaginary matrix, the conditions (\ref{b}) and (\ref{d})  are trivially satisfied and $k_1=k_3=2k_2$ in (\ref{Imtau}) and (\ref{Imsigma}).

Applying the \textit{Mathematica} algorithm outlined in Appendix \ref{Appendix}, 
we can find symplectic matrices that ensure these conditions. Take for example, any matrix of the form
\begin{eqnarray}
    \begin{pmatrix}
        2 & -1-3k & -3 & -3 \\
        -1+k &1+k & 1 & 2 \\
        0 & k & 1 & 1 \\
        k &k & 1 & 2
    \end{pmatrix} \in Sp(4,\mathbb{Z}). \label{matrix1stex}
\end{eqnarray}
It can be verified that, in this particular case, the corresponding transformation on the axio-dilaton is given by
\begin{eqnarray*}
    \lambda \rightarrow \frac{(1+3k^2)\lambda-3}{-k^2\lambda +1}
\end{eqnarray*}
with $\lambda = -1$ and $k\in\mathbb{Z}$.

Another possible choice (not contained in \eqref{matrix1stex}) of symplectic matrix is
\begin{eqnarray}
    \begin{pmatrix}
        2 & -1-6k & -3k & -3k \\
        -1+2k &1+2k & k & 2k \\
        0 & 2 & 1 & 1 \\
        2 & 2 & 1 & 2
    \end{pmatrix} \in Sp(4,\mathbb{Z}) \label{matrix1stex2}.
\end{eqnarray}
In this case
\begin{eqnarray*}
    \lambda \rightarrow \frac{(1+12k^2)\lambda-3k^2}{-4\lambda +1}
\end{eqnarray*}
with $\lambda = -1$ and $k\in\mathbb{Z}$. 

Indeed, if we consider a hyperelliptic fibration whose corresponding period matrix can be written as
\begin{eqnarray*} \Omega(z) = if(z) 
\begin{pmatrix}
    2n & n \\
    n & 2n
\end{pmatrix}   
\end{eqnarray*}
with $n\in\mathbb{Z}$ and $z$ a 
coordinate on the base manifold, then the symplectic matrices 
(\ref{matrix1stex}) and (\ref{matrix1stex2}) will also ensure the $SL(2,\mathbb{Z})$ transformation for the axio-dilaton.
\end{example}

\begin{example}\label{ex2}
It can be seen that the period matrix of the Burnside curve
\begin{eqnarray*}
    y^2 = x(x^4-1)
\end{eqnarray*}
is given by
\begin{eqnarray*} \Omega = 
\begin{pmatrix}
   t & \frac{t}{2} \\
    \frac{t}{2} & t
\end{pmatrix}    
\end{eqnarray*}
with $t=\frac{1}{3}(-2+i2\sqrt{2})$ \footnote{We are grateful to Anita Rojas for clarification on this issue \cite{
Rojas2,Rojas,RojasRomero}.}. In this case, from   (\ref{Imsigma}) and (\ref{Imtau}) we obtain that  $k_1=k_3=2k_2$. Using the 
\textit{Mathematica} algorithm following the steps in Appendix \ref{Appendix},
we can ensure the existence of symplectic matrices satisfying these conditions and also (\ref{Retau}) and (\ref{Resigma}). 
Take for example 
\begin{eqnarray*}
    \begin{pmatrix}
        -2k & 1+k & k & 0 \\
        1+k(-3+2\tilde{k}) & 1+3k-(1+k)\tilde{k} & -k(-2+\tilde{k}) & k \\
        3-2\tilde{k} & \tilde{k} & -1+\tilde{k} & 1\\
        -2 & 1 & 1 & 0
    \end{pmatrix} \in Sp(4,\mathbb{Z}).
\end{eqnarray*}
where $k,\tilde{k}\in \mathbb{Z}$. It can be checked that, in these particular cases, the corresponding transformation on the axio-dilaton is given by
\begin{eqnarray*}
    \lambda \rightarrow \frac{(-1-3k^2)\lambda + k^2}{3\lambda -1}
\end{eqnarray*}
with $\lambda = -\frac{1}{3}(1+i2\sqrt{2})$. 
\end{example}

\begin{example}\label{ex3}
 In \cite{MMP} it was shown that the period matrix corresponding to the hyperelliptic curve
\begin{eqnarray*}
    y^2 = (x^3-z)^2 - g^2(x^3-m^3)^2
\end{eqnarray*}
 with $g,m$ constants, and $z$ coordinates on the base manifold $\mathbb{C}$, is given by
 \begin{eqnarray*}
     \Omega(z) = \frac{1}{2\pi i}\log{\left( 1-\frac{z}{m^3}\right)} \begin{pmatrix}
         4 & 2 \\ 2 & 4
     \end{pmatrix} .
 \end{eqnarray*}
 Observing the similarity of this period matrix to the one in example \ref{ex1},  
 it can be checked, using 
  the symplectic matrices (\ref{matrix1stex}) and (\ref{matrix1stex2}), that
 this example also satisfy the \eqref{ph} condition, so that the axio-dilaton transforms according to $SL(2,\mathbb{Z})$.
 
\end{example}

\begin{example}\label{ex4}
 If we consider the family of unimodular hyperelliptic curves with automorphism symmetry group $D_4$ we have  \cite{Rojas2,Rojas,RojasRomero}
\begin{eqnarray*}
    y^2 = x(x^2-1)(x-a)(x-a^{-1})
\end{eqnarray*}
with $a\in\mathbb{C}$. It can be seen that this family of curves contains the curves from the first and second examples as particular cases, i.e., if $a=i$ it corresponds to the Burnside curve. 

The period matrix related to this type of curve is given by
\begin{eqnarray*}
   \Omega = \begin{pmatrix}
        \tau & \beta \\
        \beta & \tau
    \end{pmatrix}
\end{eqnarray*}
with $|{\beta}|^2 = |\tau|^2-1$. In this case, we have that $k_1=k_3$ and using \textit{Mathematica}
 we find that there exist symplectic matrices, for example:
\begin{eqnarray*}
    \begin{pmatrix}
        0 & 1 & \mp k & -1 \\
        1 & \mp k & -1+k^2 & 0 \\
        0 & 0 & \pm k & 1 \\
        0 & 0 & 1 & 0
    \end{pmatrix} \in Sp(4,\mathbb{Z})
\end{eqnarray*}
with $k\in \mathbb{Z}$, such that the determinant of the period matrix transforms as the axio-dilaton.
 In this particular example, we have that there is a consistency condition given by $\tau=k\beta$, 
 which is satisfied for the first and second examples. 
 This condition is a particular case of Theorem \ref{prop10}, where $n_1=n_2=0$. 
 The corresponding transformation on the axio-dilaton is given by
\begin{eqnarray*}
    \lambda \rightarrow \frac{-k^2\lambda+(k^2-1)}{\lambda-1} 
\end{eqnarray*}
with $\displaystyle \lambda
=\frac{(k^2-1)(1-\beta^2k^2)}{1+\beta^2(k^2-1)}$  representing a physically admissible family of curves. \\
\end{example}

This family of hyperelliptic curves satisfies the split Jacobian condition from section (\ref{SJ}). 
Hence, there is a well-defined surjection from the hyperelliptic curve to an elliptic curve. 
Some implications of this issue will be addressed in the following section.

\begin{example} \label{ex5}
We may consider
\begin{eqnarray*}
    y^2 = x(x^2-1)(x-a(z))(x-a^{-1}(z))
\end{eqnarray*}
with $z\in\mathbb{C}$ local coordinates on the base manifold. On each point of the base manifold,
 there is an unimodular hyperelliptic curve with automorphism symmetry $D_4$. The period matrices related to this type of curve are given by
\begin{eqnarray*}
   \Omega(z) = \begin{pmatrix}
        \tau(z) & \beta(z) \\
        \beta(z) & \tau(z)
    \end{pmatrix}
\end{eqnarray*}
with $|{\beta}(z)|^2 = |\tau(z)|^2-1$. It can be checked, using the Mathematica algorithm based in Appendix \ref{Appendix},
that the symplectic matrices of the previous example are also valid.

We observe that when $a(z) = 1$ the curve becomes 
\begin{eqnarray*}
    y^2 = x(x-1)^3(x+1)
\end{eqnarray*}
therefore the hyperelliptic curve acquires a cusp singularity. 
Similarly,  when $a(z) = -1$ the curve becomes 
\begin{eqnarray*}
    y^2 = x(x+1)^3(x-1)
\end{eqnarray*}
and once again the hyperelliptic curve acquires a cusp singularity. 

If we consider the compact fibration inside a projective space, we can also make $a(z)=0$ and 
$a(z) = \infty$, but the 
family acquires extra singularities with higher multiplicities. 
For example for $a(z) =z$, we find  extra singularities of multiplicities greater than 3 at $z=0$ and at $z=\infty$.
Moreover, by analyzing the discriminant, it can be seen that the number of branes associated with $a(z)=0,1,-1,\infty$ 
coincide  \cite{Daflon, Lockhart}. For $a(z)=z$, we have to place a stack of 6 branes at each point.
\end{example}

\subsection{Hyperelliptic fibrations and monodromies}\label{monodromies}

F-theory on  elliptically fibered Calabi--Yau corresponds to type IIB string theory compactified on the base of the elliptic fibration. 
Working with an elliptically fibered Calabi--Yau ensures the geometrical interpretation of the axio-dilaton 
is described by  the Teichm\"uller parameters of  the elliptic fibers, 
and also guaranties the preservation of several properties like a proper amount of supersymmetry, cancellation of tadpoles, 
and backreaction effects.
The generalization to hyperelliptic fibrations is not evident if we want the fibration to be Calabi--Yau.

According to example \ref{aa2}, we have that $X=\Tot(\omega_Y)$ can be written as 
\[
\begin{tikzcd}[swap]
	\Tot(\omega_{\Sigma_p}) \arrow[]{r}[] {}
	& X \arrow[]{d}[right]{} \\  
		& \mathbb{P}^1 
\end{tikzcd}
\]
where the $Y$ is the hyperelliptic fibred surface  over $\mathbb{P}^1$ 
with fiber $\Sigma_p$.
 Consequently, in a small neighborhood of a smooth fiber, we may consider $\Tot(\omega_{\Sigma_p})$ also as
 a fiber of the $CY_4$. Moreover, by section \ref{phcond}, we know that there are families of hyperelliptic curves such that the determinant of the period matrix can be identified with the axio-dilaton together with the corresponding $SL(2,\mathbb{Z})$ transformation

 Let us emphasize that this interpretation differs from the one considered in \cite{Braun,Candelas1,Candelas2,MMP}, 
 where the axio-dilaton was given by $\tau$, that is, one of the entries of  the period matrix (\ref{Periodmatrix}). In our case, only such hyperelliptic curves 
 satisfying (\ref{ph}) will ensure the expected $SL(2,\mathbb{Z})$ transformation on the axio-dilaton $\lambda=\det\Omega$. We found that the hyperelliptic family of curves with automorphism symmetry $D_4$ \cite{Rojas2,Rojas,RojasRomero} satisfies this property. Moreover, this family of hyperelliptic curves also satisfies 
 the split 
 Jacobian condition, implying that there is a surjection from the hyperelliptic curve to an elliptic curve. 
 Considering  for instance the hyperelliptic fibration from example \ref{ex5}, we have that, on the points of the base manifold where the fiber is smooth,
  there is a surjective map from the genus two  fiber to an elliptic curve. However, singularities of hyperelliptic fibrations 
   are in general more complicated than singularities of elliptic fibrations.

In this work, we have mostly considered nodal fibers. These are singularities of the hyperelliptic fiber, but keeping $X=\Tot(\omega_Y)$ smooth. 
They correspond to type I in the Kodaira classification.
They are associated with the presence of $D7$ branes with worldvolume $\mathbb{R}^{1,7}$ located on the points of the base manifold whose
 fiber degenerates. These $D7$are related with the more general ($p,q$) 7-branes by $SL(2,\mathbb{Z})$ transformations.
  Nevertheless,  more complicated singularities appear in some cases.
For example, for the hyperelliptic fibration from example \ref{ex5}, we have that the singularities at $a=\pm 1$ are cusps. Let us note that, in example \ref{ex5}, fibers associated with purely imaginary period matrices are related to infinite distances in the moduli space.

In F-theory, moving around a puncture, the axio-dilaton undergoes a monodromy in $SL(2,\mathbb{Z})$. In \cite{Braun,Candelas1,Candelas2,MMP}, the monodromy of the period matrix around  singular  fibers of the $K3$  is contained in the U-duality group, $SO(2,n,\mathbb{Z})$. For $n=3$, we have that the U-duality group is isomorphic to $Sp(4,\mathbb{Z})$, the modular group of a genus 2 curve. In our case, the $SL(2,\mathbb{Z})$ transformation of the axio-dilaton is inherited from $Sp(4,\mathbb{Z})$. By doing this, we can obtain parabolic monodromies associated with the ($p,q$) 7-branes, but also monodromies associated with the other conjugacy classes of $SL(2,\mathbb{Z})$. 

We may consider stacks of ($p,q$) 7-branes having monodromy groups that are not necessarily parabolic but,
alternatively, are identified with ADE groups  \cite{deWolfe,deWolfe2}. In particular, stacks of 
  ($p,q$) 7-branes associated with $SO(8)$, $E_6$, $E_7$ and $E_8$ correspond to elliptic monodromies given by $\mathbb{Z}_2$, $\mathbb{Z}_3$, $\mathbb{Z}_4$ and $\mathbb{Z}_6$, respectively. 

Brane solutions related with elliptic and hyperbolic conjugacy classes were discussed in \cite{Bergshoeff7, Bergshoeff8,Hartong} from type IIB and F-theory perspectives. In \cite{Bergshoeff7} it is shown that branes with elliptic monodromy play a crucial role in type IIB while branes with hyperbolic monodromies never arise. They proposed that these Q7 may be interpreted as stacks of ($p,q$) 7-branes. However, in \cite{Bergshoeff9} it is claimed that a consistent action invariant under $\kappa$-symmetry can only be built for the parabolic case. We leave the discussion of  elliptic monodromies and stacks of ($p,q$) 7-branes in this context to future work.

In  compactifications of F-theory with fluxes, as well as in  non-supersymmetric scenarios, the Calabi--Yau condition is sometimes not imposed. 
We may also consider hyperelliptic fibrations in such cases, keeping with the identification of the axio-dilaton to the determinant of the 
period matrix as we have shown in the previous section. This approach was not considered in this paper,  we leave it for future work.


\section{Applications to constructions in F-theory II} \label{down}

\subsection{The hyperelliptic fibration occurs downstairs}\label{downstairs}
 In section \ref{totomega}, we
showed how to  construct a  Calabi--Yau threefold $X=\Tot(\omega_Y)$, realizing  
any  even value of $c_2(X)$, which
has the novelty of containing  a hyperelliptically fibered surface $Y$.
In this section we consider the standard fibrations 
 \[
\begin{tikzcd}[swap]
	T^2 \arrow[]{r}[] {}
	& CY_4 \arrow[]{d}[right]{} \\  
		& CY_3
\end{tikzcd}
\]
where the $CY_4$ is elliptically fibered over the $CY_3$, and we discuss some 
consequences of the presence of the hyperelliptic fibration inside the $CY_3$.
  In general, our fibrations will not be  locally trivial; 
that is, the fiber $T^2$ may degenerate, becoming singular, yet having genus 1.

F-theory on an elliptically fibered $CY_4$  corresponds, in our construction, to type IIB superstring theory on a $CY_3$. 
The type IIB axio-dilaton that models the string coupling corresponds to the Teichm\"uller parameter of the elliptic fibration over the $CY_3$. 
 One might inquire what role such $CY_3$'s  with arbitrarily high (or arbitrarily low) values of even $c_2$ 
 play in the physics of the specific types of $CY_4$'s in the context of F/String theories.

To explore applications of these $CY_4$'s let us recall the following:
\newline
Type IIB string phenomenological constructions often 
require D3-branes and D7-branes.  The type II D3-branes fill the spacetime dimensions, and they are points within the extra dimensions. They do not exert any backreaction on the space and are invariant under the $SL(2,\mathbb{Z})$. Consequently, they do not receive any correction when uplifted from type IIB theory to F-theory. This is significant because the same D3-brane content is present in F-theory. However, this is not the case for the D7-branes, which generically require the backreaction to be taken into account, and at the level of F-theory  are generalized to $(p,q)$ 7-branes. 

Furthermore, consistency of either of these theories requires that anomaly cancellation 
be guaranteed. In the absence of 3-form fluxes in type IIB constructions, or respectively, 4-form fluxes in F-theory, the F-theory anomaly cancellation implies the type IIB anomaly cancellation. Anomalies occur when a classical symmetry is not preserved at the quantum level. This is usually manifested in the non-invariance of the path integral under the symmetry transformation. Depending on the type of symmetry breaking, there are different types of anomalies in quantum theories. 
 In type II string theory, there are the gauge, the gravitational, and the mixed anomalies. 
  It is out of the scope of the present article to give a detailed study of anomaly cancellation, we present a brief account, and refer the interested reader to 
   \cite{AE, Weigand3,Blumenhagen, Yeuk,  deWolfe,  Inaki,  Ibanez, W-Taylor, Uranga2}. 
 \newline

Often, in type IIB string theory, cancellation of anomalies is obtained by imposing 
what is known as \textit{tadpole cancellation}
\footnote{For the case of mixed anomalies on top of it, one must also consider the Green--Schwarz mechanism.} \cite{Aldazabal-tadpole}.  
Following the Gauss law, the total charge on a compact manifold 
must vanish—this is usually known as the RR tadpole cancellation. 
In supersymmetric theories,  RR tadpole cancellation also implies that  NS-NS tadpoles are  cancelled and the theory is consistent. 
This cancellation must hold for any of the charges present in the construction. 
To this end, cancellation of all tadpoles, in particular those associated with the D3-brane, 
D5-brane, and D7-brane RR charges, is required \cite{Blumenhagen}. \newline

In type IIB theory the Dp-brane action, $S_{Dp}= S_{DBI}+S_{WZ}$, is described by 
the following integrals over the worldvolume (WV):

\begin{equation}
\begin{aligned}
&S_{DBI} = - T_p\int_{WV}
d^{(p+1)}\xi e^{-\phi}\sqrt{ \det (g + \mathcal{F})}.\\
&S_{WZ}=
2\pi T_p\int_{WV}\sum_{k}C_{2k}\Tr\left[e^{\frac{1}{2\pi}\mathcal{F}}\right]\sqrt{\frac{\widehat{A}(T)}{\widehat{A}(N)}}\label{WZ}
\end{aligned}
\end{equation}
where  $T_p$ represents the tension of the Dp-brane that we will set equal to 1,
 $\phi$ is the dilaton, $g$ is the determinant of the induced metric over the 
 worldvolume of the Dp-brane, and $\mathcal{F}$ is defined as
\begin{equation}
    \mathcal{F}=F+i^*B,
\end{equation}
where $i$ is a map from the world volume to the target space. For simplicity, we will consider the case $B=0$.
The terms $C_{2k}$ are  RR forms present in type IIB theory, $\widehat{A}$ 
is the roof-genus, 
  and in \eqref{WZ} the letters $N$ and $T$ denotes the normal and tangent bundle, respectively.

We  particularize the matter content to the target space considered,
 where $X=\Tot(\omega_Y)$ is a $CY_3$ with nontrivial $c_2$ and the surface $ Y \to \mathbb P^1$ is hyperelliptically fibered,
 and  consider constructions with  D3-branes and D7-branes.

 D7-tadpoles must be canceled.  The D7-tadpole associated with the 
 $D7$ charge can be canceled in F-theory by a proper combination of $(p,q)$ 7-branes 
 that coalesce to form a singularity that cancels the $D7$charge \cite{deWolfe}. 
The D7-tadpole cancellation in F-theory is codified in the topology
 of the elliptically fibered $CY_4$. 

The wrapping of the $D7$ branes along those 2-cycles labelled with and index $a$ contained in $Y$  forming a basis of $H_2(Y)$ 
 can also induce a D5-brane charge associated with a WZ term
\begin{equation}
    S_{D7_{5}}=\int \sum_aC_6\wedge \mathcal{F}_a.
\end{equation}
This generates a D5-tadpole.
Typically, in type IIB theory, 
 orientifold planes are introduced \cite{Dabholkar} to cancel these $D_5$-tadpoles. However, for the 
  cases we consider here, we may assume 
 that the $D7_{5}$ charge contribution vanishes. Therefore, for the particular 
 case where the $D7$branes wrap both 2-cycles   simultaneously (a fiber and the base of $Y$), 
 equality (\ref{D5}) implies that the D5-brane tadpoles here may cancel out without the need of 
 introducing orientifold planes. The topology of our genus 2 hyperelliptic fibration $Y$ facilitates the cancellation since,
\begin{equation}\label{D5}
  (\chi(\Sigma_2)+\chi(\mathbb{P}^1))=0,
\end{equation} 
observing that, for a curve  $\chi$ agrees with the first Chern number. \\

D7-branes may also induce a D3-brane charge. The contribution to the action of the term associated with the roof-genus is \cite{Plauschimm},
\begin{equation}
\sqrt{\frac{\widehat{\mathcal{A}}(T)}{\widehat{\mathcal{A}}(N)}}=
\large(1+\frac{1}{96}(\frac{l_s^2}{2\pi})^2 \Tr(R^2)+\dots\large)\wedge (1+\frac{l_s^4}{24}c_2(Y)+\dots)
\end{equation}
where $\Gamma_{D7}$ is the holomorphic 4-cycle wrapped by the brane ($Y$ in our case) and $R$ is a curvature 2-form of the tangent bundle of $\mathbb{R}^{1,3}$. Consequently, writing  $N_{D7}$ to represent the number of $D7$branes considered in a particular model, from \eqref{WZ} we obtain the following term
 \begin{equation}
    S_{WZ}\supset T_7\int_{\mathbb{R}^{3,1}}C_4\wedge\int_{Y}\left(\Tr \mathcal{F}\wedge \mathcal{F}+l_s^4 N_{D7}\frac{c_2(Y)}{24}\right)\label{WZP}. 
 \end{equation}
 The induced charge has two different physical origins: this occurs whenever
  the $D7$that fills out the 4D spacetime wraps any four cycles contained in the $CY_3$ generating $c_2(Y)$, and  
 the  other term is the contribution of magnetized D7's associated to $\Tr \mathcal{F}\wedge \mathcal{F}$.
 \\

 Cancellation of D3-tadpoles may be ensured by inducing a  D3-charge 
 that compensates the one induced by the curvature of the $CY_4$ \cite{GKP}. 
The $D_3$ brane charge 
generated by the matter content may have three different origins or a combination of them: either it is associated with the presence of D3-branes, or it is induced by magnetized D7-branes, or produced by 3-form fluxes, or else a combination of those. In \cite{Dasgupta2} it was shown that due to the Chern--Simons coupling 
when the $D7$branes are compactified on a 4-cycle $Y$, the D3-branes can be understood as instantons over the $D7$branes.
Let us consider the contribution of D7-branes wrapping the holomorphic divisor $Y$ of $X$ 
without the presence of gauge fluxes. One can observe that it generates an induced $D3$charge on the D7-brane of geometric origin:
\begin{equation*}
Q_{\mathrm{geometric}}^{D7_3}= N_{D7}\frac{\chi(Y)}{24}, 
\end{equation*}
with $\chi(Y)= c_2(Y)$
\footnote{In general, the Euler characteristic must be corrected to 
$\chi_0(\Gamma_{D7})=\chi(\Gamma_{D7}) -n_{pp}$ 
where $n_{pp}$ is the number of singularities (pinched points)
 \cite{AE}.}
 in our case. 
 This term contributes, in general, to the Euler characteristic of the Calabi Yau fourfold \cite{Blumenhagen}. In the set-up considered in this paper, the D7-branes intersect the fibers $\omega_Y$ at points,  wrapping the four-cycle associated to $Y$, i.e., the D7-branes may be seen as a multivalued section of the canonical bundle of the hyperelliptic fibered surface $Y$.  
  This scenario is only possible for the restricted values of the moduli where the D7-backreaction does not need to be considered. This region corresponds to $|z-z_0|<< \lambda$, where $z_0$ is the location of the brane.\\
  
By Lemma \ref{c2}, equality $c_2(X)=-c_1^2(Y)+c_2(Y)$
 applies,  accounting for the contribution of $c_1^2(Y)$ towards $c_2(X)$
 brought about by the $A$-roof genus of the normal bundle $N$,  thus  reflecting the nontriviallity of  the embedding of $Y$ in $X$.
As proved in Theorem \ref{ue1}, when $Y$ is a smooth projective surface, we have that $c_2(X)$ is even. 
This implies that the number of times that D7's can be wrapped becomes restricted  
for a given $X$. Equivalently, a particular
$D7$wrapping can only be turned on  particular $CY_3$’s. The $D7$wrapping generates a D3
charge that contributes to that one generated by the $CY_4$.\\

There  exists the possibility of having nontrivial gauge fluxes on the 7-branes 
that will also induce a contribution to the D3-charge. Magnetized D-branes have been 
used for type IIB phenomenological constructions in a number of papers
 in order to generate D-terms via anomalous U(1)'s \cite{Ibanez,JL}  to guarantee chirality \cite{Cascales, Cicoli, Ishibashi},
they have been used in  model building \cite{Blumenhagen,font,Weigand2}, uplift to De Sitter \cite{de Sitter, Susha}, 
and even stabilize moduli \cite{GM}. 
A magnetized $D7$brane is a D7-brane that 
contains a DBI flux associated with a non-trivial $U(1)$ line bundle $l$ with a non-vanishing first Chern class $c_1$  given by 
\begin{equation*}
\int_{\Sigma}
F_2= k\in \mathbb{Z}-\{0\}=c_1(l)
\end{equation*}
where $\Sigma$ is a 2-cycle contained in the $CY_3$.
The gauge flux condition can
be defined on $\Sigma_a$ of arbitrary genus \cite{mr}, hence also for 
hyperelliptic curves. 
We have that the contribution to the induced charge due to gauge fluxes is given by
\begin{equation}
\label{gauge}
N_{\mathrm{gauge}}=-\frac{1}{2}\int_{Y}\mathcal{F}\wedge\mathcal{F}.
\end{equation} 
\newline
When gauge fluxes of the DBI field strength are turned on a brane, consistency also requires imposing a 
quantization condition to cancel out
 the Freed--Witten anomaly. Assuming for simplicity that the pull-back of the NSNS B-field is zero, this is obtained by choosing  \cite{FW,W}
\begin{equation}
c_1(l_a)+\frac{c_1(Y)}{2}\in H^{2}(X,Z).
\end{equation}
F-theory on an  elliptically fibered compact $CY_4$ has a tadpole associated 
with the 4-form $C_4$,  given by $N_{\chi_4}=-\frac{\chi(CY_4)}{24}$ \cite{Green}. 
 Hence, in such a case, the tadpole cancellation condition is 
\begin{equation}\label{tad2}
N_{\chi_4}+N_{D_3}+N_{\mathrm{gauge}}+N_{\mathrm{fluxes}}=0.
\end{equation} 
We did not include 3 or 4-form fluxes acting on the CY construction in this text,
leaving it for future work. 
For the  D-brane content considered here, supersymmetry is preserved, 
and tadpole cancellation occurs when (\ref{tad2}) is satisfied. 
As we have signaled, in the absence of $D3$ branes, 
the induced charge due to gauge fluxes $N_{\mathrm{gauge}}$  will be necessary to cancel D3-tadpoles. 
In order to introduce matter content, singularities are required, as usual in F-theory constructions. 
A relevant property for constructions in  F-theory that support GUT models with enough positive 
$D3$ charges and branes while also having enough nonvanishing background fluxes for  stabilizing
the $D7$ moduli perturbatively, is to have $\chi(CY_4)$ large enough \cite{Blumenhagen}. 
See also \cite{Bena} about the relevance of this requirement. An interesting property of our construction is that 
 $c_2(X)$ can be arbitrarily large (or also arbitrarily small). 
Hence, the $CY_4$ in our constructions may also admit arbitrary large Euler number, pending on 
adding more singular fibers.
Then, $N_{\chi_4}$ will contribute negatively towards the $D3$-tadpole cancellation condition. 
 There are a few technical issues to be considered when adapting the use of $N_{\chi_4}$ to our models, 
 because of  noncompactness. Even though there is an easy to fix of this issue, by observing that each of
 our $CY_4$'s has the homotopy type of a compact threefold, there are many interesting features to 
 be discussed, so 
we leave a more detailed analysis of concrete model building with particular types of singularities 
in the fibers of our $CY_4$'s also for future work.\\

Summarizing, we have shown that the  constructions of Calabi--Yau manifolds
 made here allow for  many choices 
of wrapping of  $D7$-branes in type IIB, with induced $D3$ charges, and
 allowing for the possibility of cancellation of $D5$-brane tadpoles without introducing orientifold planes. 
Other examples that explore the occurrence of hyperelliptic curves for phenomenological purposes in string theory 
appear in \cite{Adams,Kachru,Marino,MMP,r,Saltman}, and references therein.


\section{Conclusions}\label{conclusion} We have presented 
a mathematical construction of  Calabi--Yau manifolds, 
 containing hyperelliptic fibrations with arbitrary even values of 
the second Chern class.  
 We analyze the construction of F-theory over a $CY_4$, on this type of background with hyperelliptic curves in two different situations: \\

In the first case, we formulate F-theory over a noncompact $CY_4$, containing  
a consistent  $CY_3$, which  is itself  $\Tot(\omega_Y)$,
where $Y$ is a hyperelliptic fibration over $\mathbb{P}^1$. 
 This is an extension of F-theory. The role of the axio-dilaton, which parametrizes the coupling constant in 
 Type IIB string theory, is associated with the moduli of the hyperelliptic fiber.
 We propose that the axio-dilaton should correspond to the determinant of the period matrix. 
 We have found that the family of hyperelliptic curves with automorphism group $D_4$ satisfies this condition. 
 Outside the singular points, there is a surjection from the
hyperelliptic fiber to an elliptic fiber due to the split Jacobian condition. However, the singularities of hyperelliptic curves may in general be
 different from those of elliptic curves. The $SL(2,\mathbb{Z})$ transformation is inherited from the modular group of the genus 2 curve. 
For the hyperelliptic fibration considered in the example \ref{ex5}, 
 we found that there are cusp singularities in points of the base manifold where the fiber degenerates.\\

In the second case, the $CY_4$ is a fibered over the $CY_3$ with 
elliptic fibers. The presence of hyperelliptic curves inside the $CY_3$ determines 
 the embedding of branes that satisfy the tadpole cancellation conditions.
 In our construction,  the possibility of having large values of $c_2(X)$ implies
  that the $CY_4$ may also admit  large (local) Euler characteristic, 
whose  contribution
 facilitates the $D3$tadpole cancellation condition in models containing background fluxes.
 This is a desired property since GUT models with enough positive $D3$charges 
 and branes may need a large number of fluxes that stabilize D7-brane moduli to cancel tadpoles.
 \\
 
If the requirement that the background be  Calabi--Yau is relaxed, either by introducing fluxes that imply loss of Ricci flatness or by compactifying on other 
types of manifolds,  hyperelliptic curves could be included without the need to introduce their nontrivial canonical bundles; hence, in such cases,
the corresponding background where type IIB string theory is formulated can be more easily identified. We leave this study for future work.

 \section{Acknowledgments}

We thank A. Uranga, A. Rojas, and K. Ray for helpful discussions. 

Ballico  is a member of MUR and GNSAGA of INdAM (Italy).

Gasparim is a Senior Associate of the Abdus Salam International Centre for Theoretical Physics, a member of 
MathAmSud 24-MATH-12 , and was supported by MUR-PRIN 2017 
for visiting Trento (Italy).

Garcia del Moral is grateful to the Physics Dept. of Univ. Antofagasta (Chile) for their kind invitation, 
where part of this work was carry out. She is partially supported by 
PID2021-125700NB-C21 MCI grant (Spain). 

 Las Heras is partially supported by ANID 
posdoctorado Chile/2022-74220044. He is also grateful for the support of Project 
PID2021-123017NB-I00, funded by MCIN/AEI/10.13039/501100011033.

\appendix
\section{Sketch of a \textit{Mathematica} algorithm}\label{Appendix}
In this section, we will enumerate the basic steps underlying  the \textit{Mathematica} algorithm used to verify the examples in section \ref{AX}.
\begin{enumerate}
    \item We consider a matrix $4\times4$ given by $$M=\begin{pmatrix}
        A & B \\
        C & D
    \end{pmatrix}$$
    where $A,B,D,C$ are $2\times 2$ matrices.
    \item We impose $\det{(M)}=1$ and $\det{(A)}d-\det{(C)}b=1$. In the examples considered in this paper, $b=\det{(B)}$ and $d=\det{(D)}$ which corresponds to $n_1,n_2=0$. We are left with 14 (out of 16) free entries of the matrix $M$.
    \item We impose the matrix $M$ to be symplectic, that is
    $$M\begin{pmatrix}
        0 & \mathbb{I} \\
        -\mathbb{I} & 0
    \end{pmatrix}M^T=\begin{pmatrix}
        0 & \mathbb{I} \\
        -\mathbb{I} & 0
    \end{pmatrix}$$
    We are left with 9 (out of 14) entries of the matrix $M$. \\
    At this point, we have a generic matrix of $Sp(4,\mathbb{Z})$ such that $\det{(A)}d-\det{(C)}b=1$
    \item From now on, we must consider a specific hyperelliptic curve (or fibration) with its corresponding period matrix. 
    
    In examples (\ref{ex1}), (\ref{ex2}) and (\ref{ex3}), we have to impose, $k_1=\alpha k_2$ and $k_3=\gamma k_2$ with $\alpha,\gamma \in\mathbb{Z}$ that depends on each case.

    In examples (\ref{ex4}) and (\ref{ex5}), we have to impose, $k_1=k_3$ and $\vert\beta\vert^2=\vert \tau \vert^2-1$.

    In all cases, we are left with 7 (out of 9) entries of the symplectic matrix $M$.
    \item Finally, we choose some integer values for the 7 parameters left and verify that the resultant symplectic matrix is such that, the determinant of the period matrix transform according to $SL(2,\mathbb{Z})$.
\end{enumerate}

\end{document}